\newtheorem{theorem}{Theorem}[section]
\newtheorem{lemma}[theorem]{Lemma}
\newtheorem{proposition}[theorem]{Proposition}
\newtheorem{corollary}[theorem]{Corollary}
\theoremstyle{definition}
\newtheorem{definition}[theorem]{Definition}
\newtheorem{example}[theorem]{Example}
\newcommand{\R}{\mathbb{R}}
\newcommand{\F}{\mathbb{F}}
\newcommand{\E}{\mathbb{E}}
\newcommand{\iid}{{\rm i.i.d.}}
\newcommand{\wrt}{{\rm w.r.t.}}
\newcommand{\eqd}{\stackrel{d}{=}}
\newcommand*{\TickSize}{2pt}%
\title{Inequalities and bounds for expected order statistics from transform-ordered families}
\date{}
\author{Idir Arab}
\affil{CMUC, Dep. Mathematics, Univ. Coimbra, Portugal; idir.bhh@gmail.com}
\author{Tommaso Lando}
\affil{Department of Economics, University of Bergamo, Italy; tommaso.lando@unibg.it}
\author{Paulo Eduardo Oliveira}
\affil{CMUC, Dep. Mathematics, Univ. Coimbra, Portugal; paulo@mat.uc.pt}
\begin{document}

\maketitle

\begin{abstract}
We introduce a comprehensive method for establishing stochastic orders among order statistics in the \iid\ case. This approach relies on the assumption that the underlying distribution is linked to a reference distribution through a transform order. Notably, this method exhibits broad applicability, particularly since several well-known nonparametric distribution families can be defined using relevant transform orders, including the convex and the star transform orders.
Moreover, for convex-ordered families, we show that an application of Jensen's inequality gives bounds for the probability that a random variable exceeds the expected value of its corresponding order statistic.

\smallskip

\noindent\textbf{Keywords}: stochastic orders; hazard rate; odds; convex; star-shaped; exceedance probability

\noindent\textbf{2020 Mathematics Subject Classification}: Primary 60E15; Secondary 62G30 62G15
\end{abstract}



\section{Introduction} 
Order statistics are fundamental tools in probability, statistics, and reliability theory. Especially in the context of reliability, a major issue consists of comparing order statistics with different ranks and sample sizes. To be more specific, let $X$ be a random variable (RV) and denote with $X_{k:n}$ the $k$-th order statistic corresponding to an \iid\ random sample of size $n$ from $X$. If $X$ represents the lifetime of some component, then $X_{k:n}$ is the lifetime of a $k$-out-of-$n$ system, that is, a system that fails if and only if at least $k$ components stop functioning.
The ageing and reliability properties of such systems, described in terms of their stochastic behaviour, are an important aspect. Hence, the issue of comparing, in some stochastic sense, the order statistics $X_{i:n}$ and $X_{j:m}$, corresponding to systems with a different number of components and different functioning requirements, naturally arises. This problem can be addressed by the theory of \textit{stochastic orders} (see~\cite{shaked2007} for general results and relationships). In particular, several results on the stochastic comparison between order statistics have been obtained, for example, by~\cite{arnold1991,arnold1991exp,kochar2006,kochar2009,kochar2012,lando2021st}.

This paper focuses on establishing conditions under which $X_{i:n}$ dominates $X_{j:m}$ in the sense that $\E u(X_{i:n})\geq \E u(X_{j:m})$ for every function $u$ in some class $\mathcal{U}$. Relationships of this kind are referred to as \textit{integral stochastic orders} with respect to a generator class $\mathcal{U}$, as defined by~\cite{muller1997}. These orders include comparisons of expected order statistics when $\mathcal{U}$ contains the identity function. Significant examples of integral stochastic orders include the \textit{increasing concave} (ICV), \textit{increasing convex} (ICX), and the \textit{star-shaped} (SS) orders (see~{\cite{shaked2007}}).
In contrast to numerous methods found in the literature~\cite{arnold1991exp,wilfling1996c,kundu2016}, our approach does not assume a known parametric form for {the cumulative distribution function (CDF) $F$ of the RV $X$}.
Instead, we opt for a more flexible approach, by making nonparametric assumptions about $F$. Specifically, we assume that $G^{-1} \circ F\in\mathcal{H}$, where $G$ is a carefully chosen cumulative distribution function, and $\mathcal{H}$ represents a set of increasing functions. In other words, we assume that $F$ is related to $G$ via a \textit{transform order}~{\cite{sjs}}, where $\mathcal{H}$ is referred to as the generator class.
Interesting examples of distributions satisfying transform order assumptions are the \textit{increasing hazard rate} (IHR), \textit{increasing hazard rate average} (IHRA), \textit{increasing odds rate} (IOR), \textit{decreasing density} (DD), \textit{decreasing density on average} (DDA), \textit{decreasing reversed hazard rate} (DRHR) families (see~{\cite{shaked2007,marshall2007,oddspaper}}).
{In this paper, we show} that a key step for deriving {appealing probabilistic} inequalities between order statistics within transform-ordered families involves combining integral and transform orders with the same generator class. Additionally, we illustrate the application of this approach by deriving bounds for expected {values of} order statistics.
Our method's general behaviour aligns with expectations: stronger assumptions on $F$ lead to more applicable ordering conditions between {$X_{i:n}$} and $X_{j:m}$, or more stringent bounds, and vice versa.

The paper is organized as follows. In Section~\ref{sec:gen}, we present formal definitions and outline our general approach. Although our result is of general form, its application extends seamlessly to well-known classes of distributions, discussed in Section~\ref{sec:classes}. Section~\ref{sec:conv} delves into the derivation of conditions for the ICV and ICX order between order statistics from convex-ordered families, extending some recent results of~\cite{lando2021st}.
Moving on to Section~\ref{sec:star}, we establish conditions for the SS order between order statistics within star-ordered families. The approach is heuristically extended to \textit{increasing anti-star-shaped} (IAS) order, introduced in Section~\ref{sec:gen}, based on a simulation algorithm. Finally, Section~\ref{sec:bounds} provides bounds for the probability that $X$ exceeds its expected order statistic $\E X_{i:n}$, that is, the probability that a single component surpasses the expected lifetime of the system. As a byproduct of this general result, we provide a new characterisation of the log-logistic distribution (with shape parameter 1).

\section{Preliminaries and some definitions}
\label{sec:gen}
Throughout this paper, ``increasing'' and ``decreasing'' are taken as ``non-decreasing'' and ``non-increasing'', respectively, and the generalised inverse of an increasing function $v$ is denoted as $v^{-1}(u)=\sup\{x\in\mathbb{R}:\,v(x)\leq u\}$.
Moreover, the beta function with parameters $a,b>0$ is denoted with $\mathcal{B}(a,b)=\int_0^1 t^{a-1}(1-t)^{b-1}\,dt$.
Finally, given an absolutely continuous CDF, its density function is denoted by the corresponding lowercase letter.

We shall consider two general families of stochastic orders, characterised either by integration or shape assumptions, which are shown to be crucial for establishing comparisons between expectations of order statistics.
\begin{definition}[\cite{muller1997}]\label{i order}
Let $\mathcal{U}$ be some family of functions. We say that $X$ dominates $Y$ in the $\mathcal{U}$-integral stochastic order, denoted as $X\geq_\mathcal{U}^I Y$, if {$\E u(X)\geq\E u(Y)$} for every $u\in \mathcal{U}$, provided that the integrals exist. $\mathcal{U}$ is referred to as the generator of the integral order.
\end{definition}
Setting particular choices for the generator class we obtain some well known stochastic orders. We recall some relevant classes of functions before presenting the translation of the previous definition into specific ordering relations.
\begin{definition}
A non-negative function $h(x)$, defined for {$x\geq 0$} and such that $h(0)=0$, is said to be
    \begin{enumerate}
    \item
    star-shaped at the origin if every segment joining the origin with the graph of $h$ {always stays} above the graph, or, equivalently, if $\frac {h(x)}x$ is increasing;
    \item
    anti-star-shaped (at the origin) if every segment that joins the origin with the graph of $h$ is always below the graph, or, equivalently, if $\frac {h(x)}x$ is decreasing.
    \end{enumerate}
\end{definition}

We will focus on the following integral stochastic orders, obtained from Definition~\ref{i order} for particular generator classes.
\begin{definition}
Assume that $X\geq_\mathcal{U}^IY$. We say that $X$ dominates $Y$ in
\begin{enumerate}
\item
the usual stochastic order, denoted as $X\geq_{st}Y$, if $\mathcal{U}$ is the family of increasing functions;
\item
the increasing concave (ICV) order, denoted as $X\geq_{icv}Y$, if $\mathcal{U}$ is the family of increasing concave functions;
\item
the increasing convex (ICX) order, denoted as $X\geq_{icx}Y$,  if $\mathcal{U}$ is the family of increasing convex functions;
\item
the star-shaped (SS) order, denoted as $X\geq_{ss}Y$, if $\mathcal{U}$ is the family of star-shaped functions;
\item
the increasing anti-star-shaped (IAS) order, denoted as $X\geq_{ias}Y$, if $\mathcal{U}$ is the family of increasing anti-star-shaped functions.
\end{enumerate}
\end{definition}
The ICV, ICX, and SS orders are well known (see for instance \cite{shaked2007}). Differently, the IAS order seems not to have been studied.
As we will discuss in Section~\ref{sec:ias.ss}, the IAS order has the disadvantage, unlike the others, that an easy to check characterization is not available.
The relationships among classes of functions yield the following implications (see Theorem~4.A.55 in \cite{shaked2007} for the first line, while the second is proved later in Proposition~\ref{prop:ias}):
$$
\begin{array}{ccccc}
X\geq_{st}Y & \Longrightarrow & X\geq_{ss}Y & \Longrightarrow & X\geq_{icx}Y \\
X\geq_{st}Y & \Longrightarrow & X\geq_{ias}Y & \Longrightarrow & X\geq_{icv}Y
\end{array}
$$

All these orders imply inequality of the means, $\E X\geq \E Y$ since the identity function belongs to each of the above classes.

We now introduce a second general family of stochastic orders.
\begin{definition}[\cite{sjs}]\label{t order}
Let {$\mathcal{H}$} be some family of increasing functions. We say that $X\sim F$ dominates $Y\sim G$ in the {$\mathcal{H}$}-transform order, denoted as {$X\geq_\mathcal{H}^T Y$}, or, equivalently, {$F\geq_\mathcal{H}^T G$}, if $F^{-1}\circ G\in{\mathcal{H}}$. {$\mathcal{H}$ is referred to as the generator of the transform order $\geq_\mathcal{H}^T$.}
\end{definition}
Similarly to the integral stochastic orders defined earlier, the following transform orders may be obtained from Definition~\ref{t order} by taking $\mathcal{H}$ as the class of convex and star-shaped functions, respectively.
\begin{definition}
Assume that $X{\geq_\mathcal{H}^T}Y$. We say that $X$ dominates $Y$ in
\begin{enumerate}
\item
the convex transform order, denoted as $X\geq_c Y$, if ${\mathcal{H}}$ is the family of (increasing) convex functions;
\item
the star order, denoted as $X\geq_* Y$, if ${\mathcal{H}}$ is the family of star-shaped functions.
\end{enumerate}
\end{definition}

We should note that the standard stochastic order may be seen {both} as an integral and a transform order. In fact, $X\geq_{st}Y$ if $F^{-1}\circ G(x)\leq x$, for every $x$.

In this article, we show that a useful approach to obtaining interesting stochastic inequalities consists in a suitable combination of integral and transform orderings based on a common generator class.

\section{Main result}
\label{sec:main}

We now address the comparison of order statistics with respect to integral stochastic orders. We shall be taking $F$ as the CDF of interest, and $G$ some suitably chosen reference CDF. It is well known that the CDF of $X_{i:n}$ is given by $F_{B_{i:n}} \circ F$, where $F_{B_{i:n}}$ is the CDF of a beta random variable with parameters $i$ and $n-i+1$, that is, $B_{i:n}\sim beta(i,n-i+1)$ (see \cite{jones2004}).
Equivalently, one can write $X_{i:n}{\eqd} F^{-1}\circ B_{i:n}$. This representation renders it difficult to establish conditions for a stochastic comparison between two different order statistics, say {$X_{i:n}$} and $X_{j:m}$, since the result depends on the four parameters $i,j,n,m$ and on the analytical form of $F$. In a parametric framework, $F$ is assumed to be known {up to defining several real parameters}, so the problem boils down to a mathematical exercise, which may still be analytically complicated.
However, if $F$ is {in some nonparametric class}, the problem is more complicated, and{, as we show in the sequel,} it can be solved just by adding some {shape} constraints on $F$. {In this nonparametric framework}, results may still be obtained by applying a simple decomposition trick{: write $X_{i:n}\eqd F^{-1}\circ G\circ G^{-1}\circ B_{i,n}$ and assume that $F$ is related to some known $G$ by a suitable transform order}. Indeed, in this case, the analytical form of $G$ {being} known, the problem reduces to a simpler comparison between known RVs, namely $G^{-1}\circ B_{i:n}$ and $G^{-1}\circ B_{j:m}$.

For the sake of convenience and flexibility in the applications of our main result, we introduce the following notation.
\begin{definition}
\label{dom_class}
Let $G$ be some CDF and $\mathcal{H}$ some family of increasing functions. We define $\mathcal{F}_\mathcal{H}^G=\{F:F\geq_\mathcal{H}^TG\}$, that is, the family of CDFs that dominate $G$ with respect to the $\mathcal{H}$-transform order.
\end{definition}
{We may now state our main result}, which establishes sufficient conditions for comparing {expected} order statistics.
\begin{theorem}\label{general}
Let $\mathcal{H}$ be a class of increasing functions. If, for some given CDF $G$, $X\sim F\in \mathcal{F}_\mathcal{H}^G$, {$G^{-1}\circ B_{i:n}\geq_\mathcal{H}^I G^{-1}\circ B_{j:m}$} and $\geq_\mathcal{H}^I$ is preserved under $\mathcal{H}$ transformations, then $X_{i:n}\geq_{\mathcal{H}}^I X_{j:m}$.
\end{theorem}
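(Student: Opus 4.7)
The plan is to carry out exactly the ``decomposition trick'' flagged in the paragraph preceding the theorem: rewrite each $X_{i:n}$ as the image, under a single $\mathcal{H}$-function, of a reference order statistic built from $G$, and then chain the three hypotheses.

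First I will set $\phi:=F^{-1}\circ G$. The assumption $F\in\mathcal{F}_\mathcal{H}^G$, combined with Definition~\ref{t order}, gives $\phi\in\mathcal{H}$ at once. Using the standard quantile representation $X_{i:n}\eqd F^{-1}\circ B_{i:n}$ together with the factorisation $F^{-1}=\phi\circ G^{-1}$, I will rewrite
$$
X_{i:n}\eqd \phi(Z_{i:n}),\qquad X_{j:m}\eqd \phi(Z_{j:m}),
$$
where $Z_{i:n}:=G^{-1}\circ B_{i:n}$ and $Z_{j:m}:=G^{-1}\circ B_{j:m}$. The second hypothesis of the theorem is precisely $Z_{i:n}\geq_\mathcal{H}^I Z_{j:m}$, and the preservation hypothesis $\geq_\mathcal{H}^I\in\mathcal{P}_\mathcal{H}$ (Definition~\ref{order_preserv}) then upgrades this, via the function $\phi\in\mathcal{H}$, to $\phi(Z_{i:n})\geq_\mathcal{H}^I \phi(Z_{j:m})$, i.e., $X_{i:n}\geq_\mathcal{H}^I X_{j:m}$. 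Unpacking Definition~\ref{i order}, this is exactly the inequality $\E h(X_{i:n})\geq \E h(X_{j:m})$ for every $h\in\mathcal{H}$ for which the integrals exist, which is the desired conclusion.

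The argument is essentially a one-line chain once the three definitions are unpacked, so I do not expect a substantial obstacle. The only delicate point is the distributional identity $F^{-1}\circ B_{i:n}\eqd \phi\circ G^{-1}\circ B_{i:n}$, which ultimately asks that $G\circ G^{-1}$ acts as the identity on the distribution of $B_{i:n}$. This is immediate whenever $G$ is continuous, which is a natural standing assumption for the reference CDF in a transform-order setting; in full generality it still holds almost surely, because $B_{i:n}$ has a bounded continuous density on $(0,1)$ while $\{u\in(0,1):G(G^{-1}(u))\neq u\}$ is a Lebesgue-null union of countably many points at the atoms of $G$.
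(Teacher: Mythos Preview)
Your proposal is correct and follows essentially the same route as the paper: set $\phi=F^{-1}\circ G$, use $F\in\mathcal{F}_\mathcal{H}^G$ to place $\phi\in\mathcal{H}$, rewrite $X_{i:n}\eqd\phi(G^{-1}\circ B_{i:n})$, and then invoke preservation $\geq_\mathcal{H}^I\in\mathcal{P}_\mathcal{H}$ to upgrade the assumed inequality before unpacking Definition~\ref{i order}. Your closing remark on the $G\circ G^{-1}$ issue is a welcome extra, as the paper simply writes $X_{i:n}\eqd F^{-1}\circ G\circ G^{-1}\circ B_{i:n}$ without comment.
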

\begin{proof}
Writing {$X_{i:n}\eqd F^{-1}\circ G\circ G^{-1}\circ B_{i:n}$}, the result follows easily from the definitions above. In fact, the order $\geq_\mathcal{H}^I$ is preserved under $\mathcal{H}$-transformations, whereas the assumption $F\in \mathcal{F}_\mathcal{H}^G$ ensures that $F^{-1}\circ G$ is an $\mathcal{H}$-transformation. Therefore, applying the transformation $F^{-1}\circ G$ to both sides of the stochastic inequality {$G^{-1}\circ B_{i:n}\geq_\mathcal{H}^I G^{-1}\circ B_{j:m}$}, we obtain {$X_{i:n}\geq_\mathcal{H}^I X_{j:m}$}, which implies the desired result by definition of integral stochastic orders, {taking into account that} $F^{-1}\circ G\in\mathcal{H}$.
\end{proof}
Note that, if $\mathcal{H}$ is closed under the composition of functions, the preservation assumption of the $\geq_\mathcal{H}^I$ order is automatically fulfilled. Despite the simplicity of Theorem~\ref{general}, its applications are remarkably
interesting, showcasing the profound implications of the interplay between integral and transform orders.

\section{Types of class generators}
\label{sec:classes}
Definitions~\ref{i order} and \ref{t order} become particularly interesting when the generator classes are chosen as well-known and popular families. We will now show that some of the already mentioned classes are encompassed within this framework, and add a number of further interesting families of distributions that can also be addressed. Indeed, according to the choice of the class $\mathcal{H}$ and of the reference CDF $G$ in Theorem~\ref{general}, these choices yield different families of the type $\mathcal{F}_\mathcal{H}^G$, which, we recall, are defined via a transform order.
As shown below, when $\mathcal{H}$ is the class of increasing convex or concave functions, $\mathcal{F}_\mathcal{H}^G$ may be characterised using the convex transform order. Hence, we will refer to these choices of $\mathcal{H}$ as \textit{convex-ordered} families.
Similarly, when $\mathcal{H}$ is the class of star-shaped or (increasing) anti-star-shaped functions, $\mathcal{F}_\mathcal{H}^G$ may be characterised via the star transform order, so we will {refer to these choices} as \textit{star-ordered} families.
For the sake of simplicity, besides the already defined classes $\mathcal{C}$ of convex functions, and $\mathcal{S}$ of functions that are star-shaped at the origin, we shall define $\mathcal{V}$ as the class of concave functions, and $\mathcal{A}$ as the class of increasing anti-star-shaped functions. Bear in mind that a function is convex if and only if its inverse is concave, so that $F\geq_\mathcal{C}^TG$ is equivalent to $F\leq_\mathcal{V}^TG$. The same relation holds between star-shaped and increasing anti-star-shaped functions, namely, $F\geq_\mathcal{S}^TG$ is equivalent to $F\leq_\mathcal{A}^TG$. This is stated as follows.
\begin{lemma}
\label{ias-ss-1}
$h$ is star-shaped if and only if $h^{-1}$ is increasing anti-star-shaped.
\end{lemma}
\begin{proof}
Let $h$ be star-shaped, so $\frac{h(x)}{x}$ is increasing. Note that $h$ is strictly increasing by construction, but it may have jumps, corresponding to intervals at which the generalised inverse $h^{-1}$ is constant. Proceeding by composition, $\frac{y}{h^{-1}(y)}$ is increasing, even in those intervals where $h^{-1}$ is constant. So the ratio $\frac{h^{-1}(y)}{y}$ is decreasing, concluding the proof.
\end{proof}

The results that follow from Theorem~\ref{general} depend obviously on the choice of $G$. In particular, we will consider the uniform distribution on the unit interval, with CDF $U(x)=x$, $x\in[0,1]$, the exponential distribution, with CDF $\mathcal{E}(x)=1-e^{-x}$, $x\geq0$, the standard logistic distribution with CDF ${L}(x)=\frac{1}{e^{-x}+1}$, $x\in\R$, {and} the log-logistic distribution with shape parameter equal to 1, hereafter LL1, with CDF ${LL}(x)=\frac x{1+x}$, $x\geq0$. These reference distributions, as described below, lead to several well-known families of distributions.
We will also consider the corresponding ``\textit{negative}'' versions: in general, if $Y\sim G$ then $-Y\sim G_-$, where $G_-(x)=1-G(-x)$. Note that, due to symmetry, for the logistic distribution we have $L=L_-$.

Combining the classes $\mathcal{C}$, $\mathcal{V}$, $\mathcal{S}$, or $\mathcal{A}$ with the choices of $G$ discussed above, we may generate {several different} families of distributions{, some of them well known in the literature. An application of Theorem~\ref{general} will derive inequalities that hold for each of the constructed classes of distributions}. Naturally, some of these are more interesting than others. Hereafter we will focus on the following ones.
\begin{enumerate}
\item
The class of concave CDFs, also known as \textit{decreasing density} (DD) class, as it requires the existence of a decreasing PDF (except, possibly, at the right-endpoint of its support). This may be obtained by $\mathcal{F}_{\mathcal{C}}^{U}=\{F:F\geq^T_{\mathcal{C}}U\}=\{F:F\geq_{c}U\}=\{ F^{-1}\in\mathcal{C}\}={\mathcal{V}}$. This class has received much attention in the literature, for instance, it is a typical assumption for shape-constrained statistical inference (see, for example,~\cite{groeneboom2014}). Among known parametric models, the gamma, the log-logistic, and the Weibull distributions, with shape parameters less than or equal to 1, belong to this class.

\item
The class of convex CDFs, also known as \textit{increasing density} (ID) class, as it requires the existence of an increasing PDF (except, possibly, at the right-endpoint of its support). This may be obtained by $\mathcal{F}_{\mathcal{V}}^{U}=\{F:F\geq^T_{\mathcal{V}}U\}=\{F:U\geq_{c}F\}=\{ F^{-1}\in\mathcal{V}\}={\mathcal{C}}$. This class is generally less applicable than the DD one, as it requires bounded support, and contains few known parametric models.

\item
The class of star-shaped CDFs. In the case of absolutely continuous distributions, this is also known as the class of distribution with \textit{increasing density on average} (IDA). This may be obtained as $\mathcal{F}_{\mathcal{A}}^{U}=\{F:F\geq^T_{\mathcal{A}}U\}=\{F:U\geq_\ast F\}={\mathcal{S}}$. This class extends the applicability of the ID class.

\item
The class of anti-star-shaped CDFs. In the case of absolutely continuous distributions, this is also known as the class of distribution with \textit{decreasing density on average} (DDA), as it requires the $\frac{F(x)}{x}{=\frac1x\int_0^xf(t)\,dt}$ to be decreasing. This may be obtained by $\mathcal{F}_{\mathcal{S}}^{U}=\{F:F\geq^T_{\mathcal{S}}U\}=\{F:F\geq_\ast U\}=\{ F^{-1}\in\mathcal{S}\}={\mathcal{A}}$. This is an interesting class, as it extends the applicability of the popular DD class, allowing for non-monotonicity of the PDF and jumps in the CDF.

\item
The class of distributions with a convex \textit{hazard function}, $H=-\ln (1-F)$, that is, the well-known \textit{increasing hazard rate} (IHR) class~\cite{marshall2007}, as it requires the existence of an increasing hazard rate function $h=\frac f{1-F}$ (except, possibly, at the right-endpoint of the support). This may be obtained by $\mathcal{F}_{\mathcal{V}}^{\mathcal{E}}=\{F:F\geq^T_{\mathcal{V}}\mathcal{E}\}=\{F:F^{-1}\circ \mathcal{E}\in\mathcal{V}\}=\{F:\mathcal{E}\geq_{c}F\}$. The properties and applicability of IHR models are well known.

\item
The class of distributions with a star-shaped {hazard function}. This is denoted as the \textit{IHR on average} (IHRA) class, as it requires $\frac{H(x)}{x}=\frac{1}{x}\int_0^xh(t)\,dt$, in the absolutely continuous case, to be increasing. This class may be obtained as
$\mathcal{F}_{\mathcal{A}}^{\mathcal{E}}=\{F:F\geq^T_{\mathcal{A}} \mathcal{E}\}=\{F:F^{-1}\circ \mathcal{E}\in\mathcal{A}\}=\{F:\mathcal{E}\geq_\ast F\}$. This is a relevant class (see \cite{marshall2007,shaked2007}) which extends the applicability of the IHR class (in the non-negative case).

\item
The class of distributions with a concave {hazard function}, that is, the \textit{decreasing hazard rate} (DHR) class~\cite{marshall2007}, as it requires the existence of a decreasing hazard rate function $h=\frac f{1-F}$. {Analogously, to the previous example, this class} may be obtained by $\mathcal{F}_{\mathcal{C}}^{\mathcal{E}}=\{F:F \geq^T_{\mathcal{C}}\mathcal{E}\}=\{F:F^{-1}\circ \mathcal{E}\in\mathcal{C}\}=\{F:\mathcal{E}\leq_{c}F\}$.

\item
The class of distributions with an anti-star-shaped {hazard function}. This is denoted as the \textit{DHR on average} (DHRA) class, as it requires $\frac{H(x)}{x}=\frac{1}{x}\int_0^xh(t)\,dt$,
in the absolutely continuous case, to be decreasing. This class may be obtained as
$\mathcal{F}_{\mathcal{S}}^{\mathcal{E}}=\{F:F \geq^T_{\mathcal{S}}\mathcal{E}\}=\{F:F^{-1}\circ \mathcal{E}\in\mathcal{S}\}=\{F:\mathcal{E}\leq_\ast F\}$. It extends the applicability of the DHR class (in the non-negative case).

\item
{The class of CDFs such that $\log F$ is concave, also characterised by $\frac fF$ being decreasing, known as the \textit{decreasing reversed hazard rate} (DRHR) class. This is a rather broad class of distributions.} {One may obtain this class taking} {$\mathcal{H}=\mathcal{C}$} {and} ${\mathcal{F}_{\mathcal{C}}^{\mathcal{E}_-}}=\{F:F \geq_c \mathcal{E}_-\}${, the class of functions that dominate $\mathcal{E}_-$ \wrt\ the convex transform order}.

\item
The class of distributions with a convex \textit{odds function}, $\frac{F}{1-F}$, that is, the \textit{increasing odds rate} (IOR) class~\cite{oddspaper}, as it requires the existence of an increasing odds rate function $\frac f{(1-F)^2}$ (except, possibly, at the right-endpoint of the support). This may be obtained by $\mathcal{F}_{\mathcal{V}}^{LL}=\{F:F\geq^T_{\mathcal{V}}LL\}=\{F:F^{-1}\circ LL\in\mathcal{V}\}=\{F:LL\geq_{c}F\}$. The properties and applicability of IOR models are discussed by~\cite{oddspaper} and \cite{lando2023nonparametric}.

\item
The class with a concave odds function may be similarly defined as the \textit{decreasing odds rate} (DOR) class, which may be obtained as $\mathcal{F}_{\mathcal{C}}^{LL}$.

\item
The class of distributions with a convex \textit{log-odds function}, $\log\frac{F}{1-F}$, that is, the \textit{increasing log-odds rate} (ILOR) class~\cite{zimmer1998}, as it requires the existence of an increasing log-odds rate function $\frac f{F(1-F)}$. This may be obtained by $\mathcal{F}_{\mathcal{V}}^{L}=\{F:L\geq_{c}F\}$.

\item
The class with a concave log-odds function may be similarly defined as the \textit{decreasing log-odds rate} (DLOR) class, which may be obtained as $\mathcal{F}_{\mathcal{C}}^{L}$.
\end{enumerate}

\section{Convex-ordered families}
\label{sec:conv}

In this section, we apply Theorem~\ref{general} to families of distributions which may be obtained through the convex transform order, extending some recent results of~\cite{lando2021st}. All results are summarised in the following corollaries. Although some cases are already proved in~\cite{lando2021st}, we report them here for the sake of completeness.

\begin{corollary}[\cite{lando2021st}, Corollary~3.4]\label{COR1}
If $i \geq j$, any of the following conditions imply $X_{i:n}{\ge }_{icv} X_{j:m}$.
    \begin{enumerate}
	\item
	$F$ is ID and $\frac{i}{n+1} \geq \frac{j}{m+1}$;		
	\item
	$F$ is IHR and $\sum_{k=n-i+1}^{n}{\frac{1}{k}}\geq \sum_{k=m-j+1}^{m}{\frac{1}{k}}$;
	\item
	$F$ is IOR class and $\frac{i}{n} \geq \frac{j}{m}$;
	\item
	$F$ is ILOR and $\sum_{k=i}^{n-i}\frac1k\leq\sum_{k=j}^{m-j}\frac1k$.
	\end{enumerate}
\end{corollary}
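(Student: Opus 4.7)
My plan is to deduce each of the four cases directly from Theorem~\ref{general} by taking, in all four cases, $\mathcal{H}=\mathcal{V}$ (the class of increasing concave functions), so that the induced integral order $\geq_\mathcal{V}^I$ coincides with $\geq_{icv}$; the reference CDF $G$ is then chosen to match the shape assumption on $F$ via the characterisations given in Section~3: $G=U$ for (1) ID, $G=\mathcal{E}$ for (2) IHR, $G=LL$ for (3) IOR, and $G=L$ for (4) ILOR. With these pairings, the hypothesis ``$F$ is in the given class'' is literally the statement $F\in\mathcal{F}_{\mathcal{V}}^G$.

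Two remaining hypotheses of Theorem~\ref{general} need to be verified. First, $\geq_{\mathcal{V}}^I\in\mathcal{P}_{\mathcal{V}}$: if $X\geq_{icv}Y$ and $h\in\mathcal{V}$, then for every $u\in\mathcal{V}$ the composition $u\circ h$ is again increasing and concave, so $\E u(h(X))\geq\E u(h(Y))$ and hence $h(X)\geq_{icv}h(Y)$. Second, I must verify the comparison $G^{-1}\circ B_{i:n}\geq_{icv} G^{-1}\circ B_{j:m}$, which, since $G^{-1}(B_{i:n})$ has the distribution of the $i$-th order statistic from an iid sample of size $n$ drawn from $G$, amounts to a comparison of order statistics of the reference distribution in the ICV order. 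A necessary condition for this is the corresponding inequality of means, and standard beta-function identities give
\[
\E B_{i:n}=\tfrac{i}{n+1},\qquad \E[-\log(1-B_{i:n})]=\sum_{k=n-i+1}^{n}\tfrac{1}{k},
\]
\[
\E\tfrac{B_{i:n}}{1-B_{i:n}}=\tfrac{i}{n-i},\qquad \E\log\tfrac{B_{i:n}}{1-B_{i:n}}=\psi(i)-\psi(n-i+1).
\]
These compute to the arithmetic conditions listed in the corollary; in case~(3) the inequality $\tfrac{i}{n-i}\geq\tfrac{j}{m-j}$ rearranges, using $i\geq j$ and positivity, to $\tfrac{i}{n}\geq\tfrac{j}{m}$.

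The substantive obstacle is that a mean inequality alone is a priori weaker than the full ICV relation, so the conditions above must be upgraded. My plan for bridging the gap is the standard single-crossing criterion: if the CDFs of $G^{-1}(B_{i:n})$ and $G^{-1}(B_{j:m})$ cross exactly once with $F_{G^{-1}(B_{i:n})}$ going from below to above $F_{G^{-1}(B_{j:m})}$, then $\int_{-\infty}^t(F_{G^{-1}(B_{j:m})}-F_{G^{-1}(B_{i:n})})\,dx$ is non-negative up to the crossing and, by the mean inequality, remains non-negative afterwards, which is exactly the integral characterisation of $G^{-1}(B_{i:n})\geq_{icv} G^{-1}(B_{j:m})$. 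Since $G$ is strictly increasing, the crossing pattern of these CDFs coincides with that of the two regularised incomplete-beta functions $I_u(i,n-i+1)$ and $I_u(j,m-j+1)$, so the technical heart of the argument is to establish single crossing of the uniform order-statistic CDFs under $i\geq j$. This is precisely what is proved in \cite{lando2021st}, which yields case~(1) directly; cases~(2)--(4) follow by composing with the strictly increasing $G^{-1}$, which transports the single-crossing pattern to the order statistics of the chosen reference distribution.
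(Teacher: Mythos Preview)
Your proposal is correct and follows essentially the same route as the paper: apply Theorem~\ref{general} with $\mathcal{H}=\mathcal{V}$, note that $\geq_{icv}$ is preserved under increasing concave maps, and reduce to $G^{-1}\circ B_{i:n}\geq_{icv} G^{-1}\circ B_{j:m}$, for which the paper (in the dual proof of Corollary~\ref{COR2}) simply asserts that $i\geq j$ together with the mean inequality $\E G^{-1}(B_{i:n})\geq\E G^{-1}(B_{j:m})$ suffices, then reads off the four arithmetic conditions from the choices $G=U,\mathcal{E},LL,L$. The only difference is that you spell out the single-crossing justification behind that sufficient condition, which the paper leaves to the citation \cite{lando2021st}.
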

The flexibility concerning the choice of the $\mathcal{H}$ family in Theorem~\ref{general} allows for the following extension. Note that the first 4 cases of this corollary follow trivially from the previous result, and the fact that $X \leq_{icx} Y$ if and only if $-X \geq_{icv}-Y$ (Theorem~4.A.1 in \cite{shaked2007}).

\begin{corollary}\label{COR2}
If $i\leq j$, any of the following conditions imply $X_{i:n}{\ge }_{icx} X_{j:m}$.
    \begin{enumerate}
	\item
    $F$ is DD class and $\frac i{n+1}\geq \frac j{m+1}$;
	\item
    $F$ is DHR and $\sum_{k=n-i+1}^n\frac1k\geq \sum_{k=m-j+1}^m\frac1k$;
	\item
	$F$ is DOR class and $\frac{i}{n} \geq \frac{j}{m}$;
	\item
	$F$ is DLOR and $\sum_{k=i}^{n-i}\frac1k\leq\sum_{k=j}^{m-j}\frac1k$;
	\item
    If $F$ is DRHR and $\sum_{k=i}^{n}\frac 1k\leq \sum_{k=j}^{m}\frac 1k$;
	\item
    $F$ is DROR and $\frac n{n-i}\leq \frac m{m-j}$.
	\end{enumerate}
\end{corollary}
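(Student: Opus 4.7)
My plan is to apply Theorem~\ref{general} with $\mathcal{H}=\mathcal{C}$, the class of increasing convex functions, so that the associated integral order $\geq_\mathcal{C}^I$ is the ICX order. I would first check $\geq_{icx}\in\mathcal{P}_\mathcal{C}$: the composition of two increasing convex functions is again increasing convex, so $X\geq_{icx} Y$ and $h\in\mathcal{C}$ give $h(X)\geq_{icx} h(Y)$.

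Each item then corresponds to a reference CDF $G$ such that the stated shape assumption on $F$ is exactly $F\in\mathcal{F}_\mathcal{C}^G$, as discussed in Section~\ref{sec:gen}: $G=U$ for DD, $G=\mathcal{E}$ for DHR, $G=LL$ for DOR, $G=L$ for DLOR, $G=\mathcal{E}_-$ for DRHR, and $G=LL_-$ for DROR. The stated numerical conditions correspond in each case to the necessary mean inequality $\E[G^{-1}(B_{i:n})]\geq\E[G^{-1}(B_{j:m})]$, using the standard computations $\E[B_{i:n}]=\frac{i}{n+1}$, $\E[-\log(1-B_{i:n})]=\sum_{k=n-i+1}^n\frac{1}{k}$, $\E[B_{i:n}/(1-B_{i:n})]=\frac{i}{n-i}$, $\E[\log(B_{i:n}/(1-B_{i:n}))]=\psi(i)-\psi(n-i+1)$, $\E[\log B_{i:n}]=-\sum_{k=i}^n\frac{1}{k}$ and $\E[1/(1-B_{i:n})]=\frac{n}{n-i}$, together with the sign reversal induced by the reflected references $\mathcal{E}_-$ and $LL_-$.

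The main obstacle is upgrading each mean inequality to the stronger comparison $G^{-1}(B_{i:n})\geq_{icx} G^{-1}(B_{j:m})$, after which Theorem~\ref{general} delivers $X_{i:n}\geq_{icx} X_{j:m}$ at once. I would settle this with a unified single-crossing argument. The beta density ratio $r(t)=f_{B_{i:n}}(t)/f_{B_{j:m}}(t)$ is proportional to $t^{i-j}(1-t)^{n-i-m+j}$, and under $i\leq j$ the mean hypothesis forces $m\geq n$ and $(j-i)\leq(m-n)$, so both exponents are non-positive and $\log r$ is convex on $(0,1)$ with at most one interior minimum. Hence $r-1$ has at most two sign changes; the increasing change of variable $t=G(s)$ preserves this count, so by the variation-diminishing property of integration the CDFs of $G^{-1}(B_{i:n})$ and $G^{-1}(B_{j:m})$ cross at most once, with $G^{-1}(B_{i:n})$ carrying more right-tail mass. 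Combined with the mean inequality, the standard stop-loss characterization $\E(Z-t)_+\geq\E(Y-t)_+$ for all $t$ then yields the ICX comparison and closes the proof.
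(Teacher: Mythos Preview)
Your overall plan is exactly the paper's: apply Theorem~\ref{general} with $\mathcal{H}=\mathcal{C}$, identify each shape class with the appropriate reference $G$, note that $\geq_{icx}$ is preserved under increasing convex maps, and reduce everything to verifying $G^{-1}(B_{i:n})\geq_{icx}G^{-1}(B_{j:m})$. The paper simply \emph{asserts} that ``$i\leq j$ together with $\E G^{-1}(B_{i:n})\geq\E G^{-1}(B_{j:m})$'' is sufficient for this (implicitly relying on \cite{lando2021st}), and then just lists the six mean computations. Your proposal differs only in that you try to \emph{prove} this implication via a single-crossing (cut-criterion) argument on the beta density ratio, which is a perfectly reasonable route and indeed what underlies the cited result.

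The gap is in the step you label ``the mean hypothesis forces $m\geq n$ and $(j-i)\leq(m-n)$, so both exponents are non-positive''. This is not a single fact: there are six different mean hypotheses, and the implication must be checked separately for each. It does go through for items 1--5 (though the verifications for items 2 and 5 are not entirely trivial and deserve a line each). For item~6 as stated, however, it fails: with $i=2$, $j=3$, $n=m=5$ one has $\frac{n}{n-i}=\frac{5}{3}\leq\frac{5}{2}=\frac{m}{m-j}$, yet $(n-i)-(m-j)=1>0$. The density ratio is then monotone decreasing, $B_{j:m}$ dominates $B_{i:n}$ in likelihood ratio, and your crossing pattern goes the wrong way. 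In fact the expectation you list for the reflected log-logistic is not the right one: $LL_-^{-1}(p)=1-1/p$, so the relevant quantity is $\E[1/B_{i:n}]=\tfrac{n}{i-1}$, not $\E[1/(1-B_{i:n})]=\tfrac{n}{n-i}$; this suggests the condition in item~6 is itself misstated, and your argument is tracking that misstatement. Either prove the exponent claim case by case (and flag item~6), or invoke the sufficient condition from \cite{lando2021st} as the paper does.
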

\begin{proof}
Note that if $G^{-1}\circ F$ is increasing concave, then $F^{-1}\circ G$ is increasing convex, and that the ICX order is obviously preserved under increasing convex transformations.
As follows from Theorem~4.A.63 in~\cite{shaked2007} (remark that the ICX order is, in \cite{shaked2007}, referred as 2-icx) in conjunction with Lemma~2.6 in~\cite{lando2021st}, a sufficient condition for $G^{-1}\circ B_{i:n}\geq_{icx}G^{-1}\circ B_{j:m}$ is that $i\leq j$ and $\E G^{-1}\circ B_{i:n}\geq\E G^{-1}\circ B_{j:m}$.
Then, setting $G$ as the uniform, unit exponential, LL1, standard logistic, negative exponential, and negative LL1, we obtain conditions 1--6, respectively.
We verify only case 5., the less obvious one, corresponding to $G=\mathcal{E}_{-}$, where we need to compute
$$
\E \log B_{i:n}=\int_{0}^1\frac{ t^{i-1}(1-t)^{n-i}\log t}{\mathcal{B}(i,n-i+1)}\,dt=\psi(i)-\psi(n+1)=-\sum_{k=i}^{n} \frac1k,
$$
using  repeatedly (6.44) in~\cite{Viola2016}, where $\psi(x)=\frac{\Gamma^\prime(x)}{\Gamma(x)}$, $x\geq 0$, where $\Gamma$ represents the Euler gamma function, is the \textit{digamma} function (we refer the reader to \cite{Viola2016}, for properties of $\psi$).
\end{proof}
Note that, since both the ICV and the ICX orders imply the inequality between the means, Corollaries~\ref{COR1} and \ref{COR2} provide assumptions implying that $\E X_{i:n}\geq \E X_{j:m}$. Furthermore, we may derive conditions for the comparison with the mean of their parent distribution by setting $j=m=1$ or $i=n=1$, respectively.
	
\section{Star-ordered families}
\label{sec:star}
In this section we deal with families of distribution of the form $\{F:F\geq_\ast G\}$, which include the family of anti-star-shaped CDFs and the DHRA family, using the SS order. Then, we move to families of the form $\{F:F\leq_\ast G\}$, which include the family of star-shaped distributions and the IHRA family, using the new IAS order.

\subsection{SS order of order statistics}
Let us start with some preliminary discussion. As starshapedness refers only to functions with domain $[0,+\infty)$, in this section, we will consider only non-negative RVs. First, a simple preservation property.
\begin{lemma}[\cite{shaked2007}, Theorem 4.A.56]\label{lemmass}
$X\geq_{ss}Y$ if and only if $h(X)\geq_{ss}h(Y)$, for every star-shaped function.
\end{lemma}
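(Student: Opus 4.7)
The plan is to reduce the lemma to a single closure property: the class $\mathcal{S}$ of functions star-shaped at the origin is closed under composition. Once that is established, the forward implication is just an unwinding of Definition~\ref{i order}, and the reverse implication is immediate.

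First I would dispatch the easy ``if'' direction. Taking $h(x)=x$, which is star-shaped since $h(x)/x\equiv 1$ is (weakly) increasing, the assumption $h(X)\geq_{ss}h(Y)$ specializes to $X\geq_{ss}Y$. Next, for the ``only if'' direction, I would translate $h(X)\geq_{ss}h(Y)$ via Definition~\ref{i order} into the requirement $\E u(h(X))\geq \E u(h(Y))$ for every star-shaped $u$. Since $X\geq_{ss}Y$ supplies $\E v(X)\geq\E v(Y)$ for every star-shaped $v$, it suffices to verify that $v=u\circ h\in\mathcal{S}$ whenever $u,h\in\mathcal{S}$. In the notation of Definition~\ref{order_preserv}, this is exactly the statement $\geq_{ss}\in\mathcal{P}_{\mathcal{S}}$, and this closure property is the entire content of the lemma.

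The key step — and the place where care is needed — is showing $\mathcal{S}$ is closed under composition. Non-negativity and the boundary value $(u\circ h)(0)=u(0)=0$ are immediate. The substantive part is to show $x\mapsto u(h(x))/x$ is increasing on $(0,\infty)$, for which I would use the factorization
\[
\frac{u(h(x))}{x}=\frac{u(h(x))}{h(x)}\cdot\frac{h(x)}{x},
\]
valid on the set where $h(x)>0$ (and with both sides equal to $0$ otherwise, since $u,h\geq 0$ and $u(0)=0$). The second factor is increasing by the star-shapedness of $h$, and the first factor would be increasing in $x$ provided $h$ itself is an increasing function of $x$, since $y\mapsto u(y)/y$ is increasing by the star-shapedness of $u$.

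The one obstacle, which I expect to be the main subtlety, is that star-shapedness at the origin does not a priori entail monotonicity, so I would insert a short argument establishing that $h\in\mathcal{S}$ is in fact increasing: for $0<a<b$, $h(x)/x$ increasing gives $h(b)\geq (b/a)h(a)\geq h(a)$, using $b/a>1$ and $h(a)\geq 0$; together with $h(0)=0\leq h(x)$ this yields monotonicity of $h$ on $[0,\infty)$. With this in hand, $u(h(x))/h(x)$ is the composition of the increasing function $y\mapsto u(y)/y$ with the increasing function $h$, hence increasing in $x$; and the product of two non-negative increasing functions is increasing. Therefore $u\circ h\in\mathcal{S}$, which closes the argument.
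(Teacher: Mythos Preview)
Your proof is correct and follows essentially the same approach as the paper: both reduce the lemma to the closure of $\mathcal{S}$ under composition and establish this via the factorization $u(h(x))/x=\bigl(u(h(x))/h(x)\bigr)\cdot\bigl(h(x)/x\bigr)$. Your version is in fact more careful than the paper's, which simply asserts that $h$ is increasing without justification and does not explicitly separate out the ``if'' direction; your added monotonicity argument for star-shaped functions fills a small gap the paper leaves implicit.
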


It is also useful to remark that a function $\phi$ is star-shaped if and only if its generalized inverse $\phi^{-1}$ is increasing anti-star-shaped, as proved in Lemma~\ref{ias-ss-1}. We now recall the following characterization of the SS order.
\begin{theorem}[\cite{shaked2007}, Theorem 4.A.54]
\label{SS4A54}
$X\geq_{ss}Y$ if and only if, for every $x\geq 0$, $\int_x^\infty t\,dF(t)\geq \int_x^\infty t\,dG(t).$
\end{theorem}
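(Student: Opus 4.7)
The plan is to isolate a subfamily of star-shaped functions whose expectations are literally tail integrals of the form appearing on the right-hand side, and then to show that an arbitrary star-shaped function can be reconstructed as a non-negative mixture of members of that subfamily. This reduces the entire equivalence to an exchange-of-integrals argument.

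Concretely, for each $a\geq 0$ set $\phi_a(x)=x\,\mathbb{1}_{\{x\geq a\}}$. Then $\phi_a(0)=0$ and $\phi_a(x)/x=\mathbb{1}_{\{x\geq a\}}$ is increasing in $x$, so $\phi_a$ lies in the class of star-shaped functions generating $\geq_{ss}$. The ``only if'' direction is then immediate: if $X\geq_{ss}Y$, taking $u=\phi_a$ gives
$$\E\phi_a(X)=\int_a^\infty t\,dF(t)\geq\int_a^\infty t\,dG(t)=\E\phi_a(Y)$$
for every $a\geq 0$.

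For the converse, let $u$ be an arbitrary star-shaped function and write $u(x)=x\,g(x)$ with $g(x)=u(x)/x$ increasing and non-negative on $(0,\infty)$. Monotonicity forces $g(0^+):=\lim_{x\downarrow 0}g(x)$ to be finite, and the Lebesgue--Stieltjes decomposition $g(x)=g(0^+)+\int_{(0,x]}dg(t)$, multiplied through by $x$, yields the integral representation
$$u(x)=g(0^+)\,x+\int_{(0,\infty)}\phi_t(x)\,dg(t).$$
Integrating this identity against $dF$ and against $dG$, and interchanging order of integration via Tonelli (legitimate since every quantity is non-negative), one obtains
$$\E u(X)-\E u(Y)=g(0^+)\bigl(\E X-\E Y\bigr)+\int_{(0,\infty)}\left(\int_t^\infty s\,dF(s)-\int_t^\infty s\,dG(s)\right)dg(t).$$
Each parenthesised term inside the outer integral is non-negative by the standing hypothesis, while the hypothesis at $x=0$ gives $\E X\geq\E Y$; hence $\E u(X)\geq \E u(Y)$ for every star-shaped $u$, which is exactly $X\geq_{ss}Y$.

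The principal obstacle is bookkeeping rather than insight: one must fix a right-continuous version of $g$ so that the Stieltjes representation is exact, handle a possible atom of $dg$ at $0$ carefully (it simply gets absorbed into the $g(0^+)\,x$ term or the $\phi_0$ contribution, depending on convention), and verify the integrability needed for Tonelli (which follows painlessly from the finiteness of $\E X$ and $\E Y$, itself a byproduct of the hypothesis at $x=0$). With these checked, the conceptual engine of the proof is precisely the representation of any star-shaped $u$ as a positive superposition of the generators $\phi_t$.
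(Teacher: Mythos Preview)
The paper does not supply its own proof of this statement: it is quoted verbatim as Theorem~4.A.54 from \cite{shaked2007} and merely \emph{recalled} for later use. So there is nothing in the paper to compare your argument against.

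On its own merits your argument is correct and is in fact the standard route to this characterisation. The key observations---that each $\phi_a(x)=x\mathbb{1}_{\{x\geq a\}}$ is star-shaped (so the forward implication is immediate), and that an arbitrary star-shaped $u(x)=xg(x)$ decomposes as $g(0^+)\,x+\int_{(0,\infty)}\phi_t(x)\,dg(t)$ (so the converse reduces to Tonelli with the non-negative measure $dg$)---are exactly the ingredients used in the reference. Your remarks about fixing a right-continuous version of $g$ and about integrability are the right caveats; under the blanket assumption of finite first moments (implicit in the paper's ``provided that the integrals exist'') they raise no difficulty.
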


In the following subsections, we will frequently deal with transformations of beta RVs using the {result stated next}. The proof is omitted since it  follows straightforwardly, requiring a simple observation of the shape of the graphical representation of {the function considered} in each case.
\begin{lemma}
\label{ab}
Let $T_{a,b}(x)=x^a(1-x)^b$, where $a,b\in\mathbb{R}$, $c>0$, define $R(a,b,c)$ the set of roots of the equation $T_{a,b}(x)=c$ that are in $[0,1]$, and represent by $\#R(a,b,c)$ its cardinality. Then \textit{(i)} if $ab<0$, $\#R(a,b,c)=1$; \textit{(ii)} if $ab>0$, $\#R(a,b,c)\leq2$; \textit{(iii)} if $ab=0$, $\#R(a,b,c)\leq 1$.
\end{lemma}

The previous lemma means that when $c>0$, $R(a,b,c)$ has at most two elements.

Using the above lemmas, it is not difficult to apply Theorem~\ref{general} to wide families of distributions, as discussed in the next subsections.

The next theorem deals with the case of anti-star-shaped CDFs, denoted as DDA distributions.
\begin{theorem}
\label{dda}
Assume that $F$ is anti-star-shaped.
Denote by
\begin{equation}
\label{eq:SS_Z}
Z(x)=\frac{i}{n+1}\left(1-F_{B_{i+1:n+1}}(x)\right)-\frac{j}{m+1}\left(1-F_{B_{j+1:m+1}}(x)\right).
\end{equation}
If {$\frac{i}{n+1}\geq\frac{j}{m+1}$ and} for every $r\in R\left(i-j,n-i-(m-j),\frac{\mathcal{B}(i,n-i+1)}{\mathcal{B}(j,m-j+1)}\right)$, it holds that $Z(r)\geq 0$, then $X_{i:n}\geq_{ss} X_{j:m}$.
\end{theorem}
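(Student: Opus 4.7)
The plan is to apply Theorem~\ref{general} with $G=U$ (the uniform CDF) and $\mathcal{H}=\mathcal{S}$. Under these choices, the hypothesis that $F$ is DDA is by definition $F\in\mathcal{F}_\mathcal{S}^U$, the preservation property $\geq_{ss}\in\mathcal{P}_\mathcal{S}$ is exactly Lemma~\ref{lemmass}, and $G^{-1}=U^{-1}$ is the identity on $[0,1]$; hence the whole problem collapses to proving $B_{i:n}\geq_{ss}B_{j:m}$. For this I would invoke the integral characterisation in Theorem~\ref{SS4A54}, according to which this SS-inequality between beta variables is equivalent to $\int_x^1 t\,dF_{B_{i:n}}(t)\geq\int_x^1 t\,dF_{B_{j:m}}(t)$ for every $x\in[0,1]$ (both sides vanish for $x\geq 1$).

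Using the beta identity $\mathcal{B}(i+1,n-i+1)=\frac{i}{n+1}\,\mathcal{B}(i,n-i+1)$ a direct computation yields
\[
\int_x^1 t\,dF_{B_{i:n}}(t)=\frac{1}{\mathcal{B}(i,n-i+1)}\int_x^1 t^{i}(1-t)^{n-i}\,dt=\frac{i}{n+1}\bigl(1-F_{B_{i+1:n+1}}(x)\bigr),
\]
so that the required inequality becomes precisely $Z(x)\geq 0$ on $[0,1]$, with $Z$ as in \eqref{eq:SS_Z}; in particular $Z(1)=0$ automatically. Differentiating $Z$ and simplifying with the same beta identity gives
\[
Z'(x)=\frac{x^{j}(1-x)^{m-j}}{\mathcal{B}(j,m-j+1)}-\frac{x^{i}(1-x)^{n-i}}{\mathcal{B}(i,n-i+1)},
\]
and the equation $Z'(x)=0$ is equivalent to $T_{i-j,\,n-i-(m-j)}(x)=\mathcal{B}(i,n-i+1)/\mathcal{B}(j,m-j+1)$, whose solution set in $(0,1)$ is exactly the $R$ appearing in the statement. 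By Lemma~\ref{ab} this set contains at most two points.

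The main obstacle will be the concluding bookkeeping: deducing $Z\geq 0$ on the whole of $[0,1]$ from the hypothesis that $Z\geq 0$ only at the (at most two) points of $R$. Since $Z$ is smooth on $(0,1)$ with at most two turning points, $[0,1]$ is partitioned into at most three intervals on which $Z$ is monotone, so the infimum of $Z$ is attained either at $x=1$ (where $Z=0$), at $x=0$, or at one of the points of $R$. The plan is then to split according to the signs of $a=i-j$ and $b=(n-i)-(m-j)$ (the two regimes of Lemma~\ref{ab}) and to use the unimodality of $T_{a,b}$ on $(0,1)$ to classify each element of $R$ as a local maximum or minimum of $Z$, showing that the hypothesised non-negativity at each $r\in R$, combined with $Z(1)=0$, rules out any descent of $Z$ below zero on $[0,1]$. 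This shape-driven case analysis of $T_{a,b}$ is where the delicate part of the argument lies.
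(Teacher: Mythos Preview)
Your proposal is correct and follows essentially the same route as the paper: reduce via Theorem~\ref{general} and Lemma~\ref{lemmass} (with $G=U$, $\mathcal{H}=\mathcal{S}$) to $B_{i:n}\geq_{ss}B_{j:m}$, translate this through Theorem~\ref{SS4A54} into $Z(x)\geq 0$ on $[0,1]$, and locate the extreme points of $Z$ as the set $R$ via Lemma~\ref{ab}. The only difference is at the very end: the paper simply checks $Z$ at $0$, $1$, and the points of $R$, treating $Z(0)=\frac{i}{n+1}-\frac{j}{m+1}\geq 0$ as an assumption, whereas you propose a sign-of-$(a,b)$ case analysis to show that the hypothesis $Z(r)\geq 0$ for $r\in R$ together with $Z(1)=0$ already forces $Z(0)\geq 0$ (or else renders the hypothesis vacuous); this is a legitimate refinement of the same argument rather than a different approach.
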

\begin{proof}
Since $F^{-1}$ is star-shaped, the result holds by Theorem~\ref{general} and Lemma~\ref{lemmass}, provided that $B_{i:n}\geq_{ss} B_{j:m}${, which, taking into account Theorem~\ref{SS4A54} and the distribution of the beta order statistics mentioned before ({$B_{i:n}\sim beta(i,n-i+1)$ and $B_{j:m}\sim beta(j,m-j+1)$} \cite{jones2004}), is equivalent to}
\begin{equation}
\label{eq:SS_star}
\int_x^1 \frac{t^{i} (1-t)^{n-i}}{\mathcal{B}(i,n-i+1)}\,dt \geq \int_x^1 \frac{t^{j} (1-t)^{m-j}}{\mathcal{B}(j,m-j+1)}\,dt,\quad\forall x\in[0,1].
\end{equation}
It is easily seen that (\ref{eq:SS_star}) is equivalent to $Z(x)\geq 0$, for every $x\in[0,1]$. {Now, the} extreme points of $Z$ are at 0, 1, or among the solutions of $T_{i-j,(n-i)-(m-j)}(x)=\frac{\mathcal{B}(i,n-i+1)}{\mathcal{B}(j,m-j+1)}$, hence the result follows immediately from the assumptions.
\end{proof}

The results of Theorem~\ref{dda} can be compared with part 1. of Corollary~\ref{COR2}. Assume that $\frac{i}{n+1}\geq\frac{j}{m+1}$ {or, equivalently, that $Z(0)\geq0$}. If $F$ is concave (DD class), then $X_{i:n}\geq_{icx}X_{j:m}$ for $i\leq j$. If $F$ is increasing anti-star-shaped (yielding the wider DDA class), then the stronger order $X_{i:n}\geq_{ss}X_{j:m}$ holds if $Z(r)\geq0,$ for $r$ in the described set. Recall that the ICX order is necessary for the SS order, and $i\leq j$ is necessary for the ICX order. So, the condition $Z(r)\geq0$, for $r$ in the set defined in Theorem~\ref{dda}, is stronger than just $i\leq j$.

We may use Theorem~\ref{dda} to get a complete geometric description of the $\geq_{ss}$-comparability of order statistics when F is DDA.
Assume the sample sizes $n\leq m$ are given. Based on Theorem~1 in~\cite{arab.tilo2021} we know that $B_{i:n}\geq_{st}B_{j:m}$, which implies $B_{i:n}\geq_{ss}B_{j:m}$, whenever $i>j$ and $n-i<m-j$, that is, whenever $i>j$.
Likewise, this result also implies that $B_{i:n}\leq_{st}B_{j:m}$, implying $B_{i:n}\leq_{ss}B_{j:m}$, whenever $i<j$ and $n-i>m-j$, which is equivalent to $n-i>m-j$ (see Figure~\ref{fig:nm}). For the region $i<j<i+m-n$ we have no $\geq_{st}$-comparability.
The line $j=\frac{m+1}{n+1}i$ corresponds to points such that $Z(0)=0$, where $Z$ is given by (\ref{eq:SS_Z}). Above this line we have $Z(0)<0$ hence, according to Theorem~\ref{SS4A54}, there is no $\geq_{ss}$-comparability. Finally, we are left with the region where $i<j$ and $\frac{i}{n+1}\geq\frac{j}{m+1}$, the region not shaded in Figure~\ref{fig:nm}, where actual verification of (\ref{eq:SS_star}) is needed.
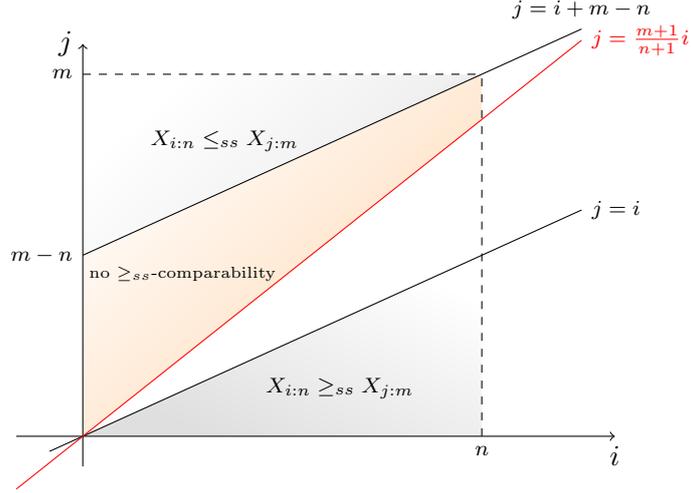
\begin{figure}
\centering
\begin{tikzpicture}[xscale=1.75,yscale=.8]
  \fill[gray!25, path fading=north,fading transform={rotate=-45}] (0,0) -- (3,0) -- (3,3) -- (0,0);
  \fill[gray!25, path fading=north,fading transform={rotate=45}] (0,3) -- (3,6) -- (0,6) -- (0,0);
  \fill[orange!30, path fading=north,fading transform={rotate=45}] (0,0) -- (3,5.25) -- (3,6) -- (0,3) -- (0,0);
  \draw[->] (-.5,0) -- (4,0) coordinate[label = {below:$i$}] (xmax);
  \draw[->] (0,-.5) -- (0,6.5) coordinate[label = {left:$j$}] (ymax);
  \draw (-.25,-.25) -- (3.75,3.75) coordinate[label = {right:{\scriptsize $j=i$}}];
  \draw[red] (-.5,-.875) 
      -- (3.75,6.563) coordinate[label = {right:{\scriptsize $j=\frac{m+1}{n+1}i$}}];
  \draw (0,3) coordinate[label = {left:{\scriptsize $m-n$}}] -- (3.75,6.75) coordinate[label = {above:{\scriptsize $j=i+m-n$}}];
  \draw[dashed] (0,6) coordinate[label = {left:{\scriptsize $m$}}] -- (3,6);
  \draw[dashed] (3,0) coordinate[label = {below:{\scriptsize $n$}}] -- (3,6);
  \draw (1.3,.8) coordinate[label={right:{\scriptsize $X_{i:n}\geq_{ss}X_{j:m}$}}];
  \draw (1.7,4.9) coordinate[label={left:{\scriptsize $X_{i:n}\leq_{ss}X_{j:m}$}}];
  \draw (.75,2.7) coordinate[label={center:{\tiny no $\geq_{ss}$-comparability}}];
\end{tikzpicture}
\caption{$\geq_{ss}$-comparability for distributions in the DDA class.}\label{fig:nm}
\end{figure}
{For $(i,j)$ in the unshaded region it is easily seen that $Z^\prime(x)=-\frac{x^i(1-x)^{n-i}}{\mathcal{B}(i+1,n-i+1)}+\frac{x^j(1-x)^{m-j}}{\mathcal{B}(j+1,m-j+1)}<0$ whenever $x$ is close to 0 or 1. Moreover, as Lemma~\ref{ab} implies that $Z$ has two extreme points in $(0,1)$, the monotonicity of $Z$ is ``$\searrow\nearrow\searrow$''.
A numerical verification shows that the initial interval where $Z$ is decreasing is rather small, so $Z$ will remain nonnegative whenever $Z(0)=\frac{i}{n+1}-\frac{j}{m+1}>0$ is large enough. Therefore, we expect that points $(i,j)$ not satisfying the assumption in Theorem~\ref{dda} will be close to the top border of the unshaded region. A few examples illustrating this behaviour are shown in Figure~\ref{fig1}.}
\begin{figure}[h]
\centering
\includegraphics[scale=.45]{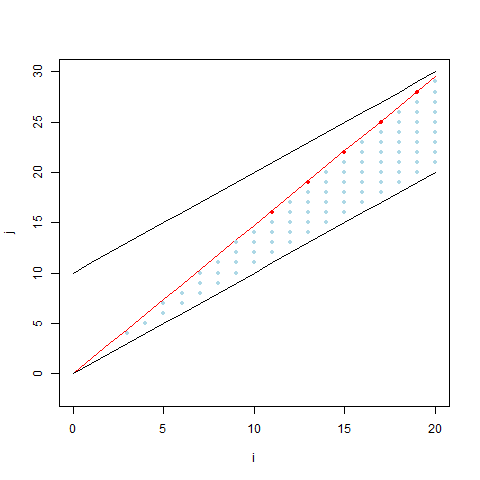}\hfil \includegraphics[scale=.45]{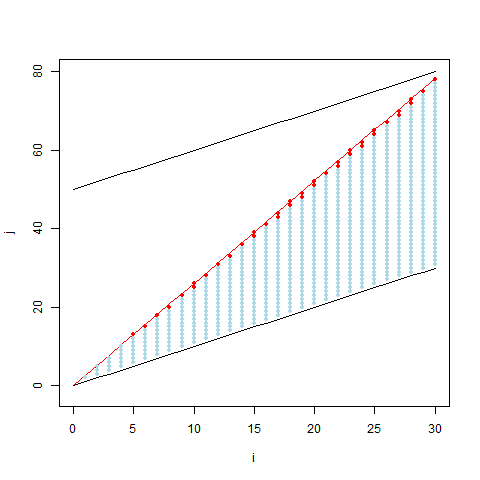}
\caption{Blue points fulfill the assumptions of Theorem~\ref{dda}.  Left: $n=20$, $m=30$. Right: $n=30$, $m=80$.}
\label{fig1}
\end{figure}

We now extend our approach to the family of DHRA distributions.

\begin{theorem}
Assume that $F$ is DHRA. Let
\begin{multline}
Z(x)={n\choose i}{i}\sum_{k=0}^{i-1} {{i-1}\choose k}(-1)^{i-1-k}\frac{\left(e^{-x}\right)^{n-k} (-k x+n x+1)}{(k-n)^2}\\
		- {m\choose j}{j}\sum_{k=0}^{j-1} {{j-1}\choose k}(-1)^{j-1-k}\frac{\left(e^{-x}\right)^{m-k} (-k x+m x+1)}{(k-m)^2}.
\end{multline}
If $Z(0)\geq0$ and $Z(-\log(1-r))\geq 0$ for every $r\in R\left(i-j,(n-i)-(m-j),\frac{\mathcal{B}(i,n-i+1)}{\mathcal{B}(j,m-j+1)}\right)$ then $X_{i:n}\geq_{ss} X_{j:m}$.
\end{theorem}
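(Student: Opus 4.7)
The plan is to mirror the proof strategy of Theorem~\ref{dda}, this time with the unit exponential $\mathcal{E}$ serving as the reference CDF. Since $F$ is DHRA, the hazard function $H=-\log(1-F)=\mathcal{E}^{-1}\circ F$ is anti-star-shaped, hence its generalised inverse $H^{-1}=F^{-1}\circ\mathcal{E}$ is star-shaped. Applying Theorem~\ref{general} with $\mathcal{H}=\mathcal{S}$ and $G=\mathcal{E}$, together with Lemma~\ref{lemmass} (which yields $\geq_{ss}\in\mathcal{P}_\mathcal{S}$), it is therefore enough to establish $\mathcal{E}^{-1}(B_{i:n})\geq_{ss}\mathcal{E}^{-1}(B_{j:m})$.

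By Theorem~\ref{SS4A54}, this SS dominance is equivalent to the inequality $Z(x)\geq 0$ for all $x\geq 0$, where $Z(x)=\int_x^\infty t\,f_i(t)\,dt-\int_x^\infty t\,f_j(t)\,dt$ and $f_i$ denotes the density of $\mathcal{E}^{-1}(B_{i:n})$. A direct change of variables gives
\[
f_i(t)=\frac{(1-e^{-t})^{i-1}\,e^{-(n-i+1)t}}{\mathcal{B}(i,n-i+1)}.
\]
Expanding $(1-e^{-t})^{i-1}=\sum_{k=0}^{i-1}\binom{i-1}{k}(-1)^{i-1-k}e^{-(i-1-k)t}$ and integrating term-by-term via $\int_x^\infty t\,e^{-at}\,dt=((ax+1)e^{-ax})/a^2$ (using $1/\mathcal{B}(i,n-i+1)=i\binom{n}{i}$) produces the closed-form expression for $Z(x)$ stated in the theorem.

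The heart of the argument is then to locate the extrema of $Z$. Differentiating gives $Z'(x)=-x\bigl(f_i(x)-f_j(x)\bigr)$, so the interior critical points in $(0,\infty)$ solve $f_i(x)=f_j(x)$. Substituting $u=1-e^{-x}$, equivalently $x=-\log(1-u)$, this equation simplifies to $T_{i-j,(n-i)-(m-j)}(u)=\mathcal{B}(i,n-i+1)/\mathcal{B}(j,m-j+1)$, i.e., $u\in R\bigl(i-j,(n-i)-(m-j),\mathcal{B}(i,n-i+1)/\mathcal{B}(j,m-j+1)\bigr)$. By Lemma~\ref{ab} this set contains at most two elements. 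Since $Z(\infty)=0$, the overall sign of $Z$ on $[0,\infty)$ is governed by its values at these interior extrema and at the boundary, so the hypothesis $Z(-\log(1-r))\geq 0$ at each root $r$ is exactly what is required.

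The principal obstacle will be the bookkeeping needed to obtain the displayed closed form for $Z(x)$ from the binomial expansion and the elementary integration, keeping track of signs and of the simplification of $(i-1-k+n-i+1)$ to $n-k$. A secondary concern, analogous to the treatment in Theorem~\ref{dda}, is verifying that the boundary behaviour is compatible: $\lim_{x\to\infty}Z(x)=0$ is immediate, while $Z(0)=\E[\mathcal{E}^{-1}(B_{i:n})]-\E[\mathcal{E}^{-1}(B_{j:m})]=\sum_{k=n-i+1}^n k^{-1}-\sum_{k=m-j+1}^m k^{-1}$ must also be nonnegative, a condition coinciding with the DHR comparison in Corollary~\ref{COR2}.
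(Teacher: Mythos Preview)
Your proposal is correct and follows essentially the same route as the paper: reduce via Theorem~\ref{general} and Lemma~\ref{lemmass} to $\mathcal{E}^{-1}(B_{i:n})\geq_{ss}\mathcal{E}^{-1}(B_{j:m})$, rewrite this through Theorem~\ref{SS4A54} as $Z(x)\geq 0$, obtain the closed form by binomial expansion and the elementary integral $\int_x^\infty t e^{-at}\,dt$, and locate the critical points via the substitution $u=1-e^{-x}$, which leads to the set $R$ of Lemma~\ref{ab}. Your closing observation about the boundary value $Z(0)$ is a legitimate point that the paper's proof leaves implicit.
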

\begin{proof}
Since $F^{-1}\circ \mathcal{E}$ is star-shaped, the result holds by Theorem~\ref{general} and Lemma~\ref{lemmass}, provided that $\mathcal{E}^{-1}\circ B_{i:n}\geq_{ss} \mathcal{E}^{-1}\circ B_{j:m}${, or, equivalently, $-\log(1-B_{i:n})\geq_{ss}-\log(1-B_{j:m})$}. This may be expressed as
\begin{multline}
\frac{1}{\mathcal{B}(i,n-i+1)}\int_x^\infty t\left(1-e^{-t}\right)^{i-1} {e^{-(n-i+1)t}}\,dt \geq\\
		\frac1{\mathcal{B}(j,m-j+1)}\int_x^\infty t \left(1-e^{-t}\right)^{j-1} {e^{-(m-j+1)t}}\,dt,\quad\forall x\geq0,
\label{Z.DHRA}
\end{multline}
Using the binomial Theorem, we obtain
\begin{multline}
\frac{1}{\mathcal{B}(i,n-i+1)}\int_x^\infty t\left(1-e^{-t}\right)^{i-1} {e^{-(n-i+1)t}}\,dt\\
		=\frac{1}{\mathcal{B}(i,n-i+1)}\sum_{k=0}^{i-1}\int_x^\infty {(-1)^{i-1-k}} {{i-1}\choose k}t{e^{-(n-k)t}}\,dt\\
		={n\choose i}{i}\sum_{k=0}^{i-1} {{i-1}\choose k}(-1)^{i-1-k}\frac{{e^{(n-k)t}} (-k x+n x+1)}{(k-n)^2},
\end{multline}
and similarly for the second term, {hence (\ref{Z.DHRA}) is equivalent to} $Z(x)\geq 0$, for every $x\geq0$. Note that $Z(+\infty)=\lim_{x\to+\infty}Z(x)=0$, due to the exponential terms. Now, the function $Z\circ \mathcal{E}^{-1}$ is continuous on $[0,1]$, so it is nonnegative if and only if
its minimal value in $[0,1]$ is nonnegative.
The extreme points of {$Z\circ \mathcal{E}^{-1}$} are easily seen to be among the solutions of $T_{i-j,(n-i)-(m-j)}(1-e^{-x})=\frac{\mathcal{B}(i,n-i+1)}{\mathcal{B}(j,m-j+1)}$, hence the result follows immediately from the assumption ${Z\circ\mathcal{E}^{-1}(r)}\geq 0$ for every $r\in R\left(i-j,(n-i)-(m-j),\frac{\mathcal{B}(i,n-i+1)}{\mathcal{B}(j,m-j+1)}\right)$.
\end{proof}

A complete geometric picture of the $\geq_{ss}$-comparability for DHRA distributions produces a plot similar to the one in Figure~\ref{fig:nm}. The shaded regions where one has comparability are the same, but the directions of the $\geq_{ss}$-comparability are reversed, taking into account that $1-B_{i:n}$ and $1-B_{j:m}$ still have $beta$ distributions with the parameters swapped.
Moreover, the red line in Figure~\ref{fig:nm} is now replaced by setting to 0 the two terms appearing in part 2. of Corollary~\ref{COR2}, that is, for each $i\leq n$ going through the coordinates $j$ and $j+1$ such that $\sum_{k=n-i+1}^n\frac1k-\sum_{k=m-j+1}^m\frac1k$ and $\sum_{k=n-i+1}^n\frac1k-\sum_{k=m-(j+1)+1}^m\frac1k$ have opposite signs.
The region below this curve, corresponding to $Z(0)>0$, and above the diagonal is seen to be where we have no $\geq_{ss}$-comparability. The remaining region needs numerical verification. Hence, with respect to Figure~\ref{fig:nm}, one reverses the direction of the comparisons, swaps the unshaded and shaded areas between the two straight lines, and, the separating red line is no longer straight.

\subsection{Properties of the IAS order}
\label{sec:ias.ss}

Our method can be applied to classes of the form $\{F:G\geq_* F\}$ using the IAS order. This includes the important IHRA class, obtained for $G=\mathcal{E}$, and also the IDA class, where we take $G=U$, the uniform distribution.
In some sense the IAS order behaves like the SS order, with the disadvantage that it does not seem to have a simple characterization based on a transformation of the CDFs, analogous to Theorem~\ref{SS4A54}, which makes it difficult to check.

The IAS order satisfies the following properties.
\begin{proposition}
\label{prop:ias}
Let $X$ and $Y$ be nonnegative random variables with CDFs $F$ and $G$, respectively.
    \begin{enumerate}
    \item
    $X\geq_{st}Y\quad\Rightarrow\quad X\geq_{ias}Y\quad\Rightarrow\quad X\geq_{icv}Y$.
    \item
    $X\geq_{ias}Y$ implies $h(X)\geq_{ias}h(Y)$, for every increasing anti-star-shaped~$h$.
    \item
    Let $\Theta$ be a random variable, and let $F(\cdot |\theta)$ and $G(\cdot |\theta)$ be the conditional CDFs of $X$ and $Y$ with respect to the event $\Theta=\theta$. If $F(\cdot|\theta)\geq_{ias}G(\cdot|\theta)$ for every possible realization $\theta$ of $\Theta$, then $X\geq_{ias}Y$.
	\end{enumerate}
\end{proposition}
\begin{proof}
\begin{enumerate}
\item
The first implication follows from the fact that all increasing anti star-shaped functions are increasing. Let $\phi$ be an increasing concave function. If $\phi(0)=0,$ $\phi$ is also anti-star-shaped and {$X\geq_{ias}Y$ implies that $\E\phi(X)\geq\E\phi(Y)$}. If $\phi(0)\neq0$, define $\phi_0(x)=\phi(x)-\phi(0)$, which is still increasing concave, and proceed similarly.
\item
Let $\phi$ be any increasing anti-star-shaped function. Taking into account that $h$ is increasing, the quotient
$$
\frac{\phi(h(x))}{x} = \frac{\phi(h(x))}{h(x)} \frac{h(x)}{x},
$$
is the product of two decreasing functions, hence it is decreasing itself. That is, $\phi\circ h$ is increasing anti-star-shaped. Therefore, $X\geq_{ias} Y$ implies that $\E(\phi \circ h(X))\geq \E(\phi \circ h (Y))$ or, equivalently, $h(X)\geq_{ias}h(Y)$.
\item This follows directly from the tower law of conditional expectations.
\end{enumerate}
\end{proof}	
Properties 2. and 3. mean that the IAS order is closed under composition and mixtures, respectively. Property 1. shows why the IAS order can be useful. In fact, it measures size and dispersion at the same time. The IAS order implies the inequality of the means, moreover, if $\E X=\E Y$, then $X\geq_{ias}Y$ implies that ${\rm Var}(X)\leq{\rm Var}(Y)$. {Below, we provide examples showing that neither of the implications in part 1. of Proposition~\ref{prop:ias} is an equivalence.} The above properties suggests that the IAS order can be used as a valid (and stronger) alternative to the commonly-used ICV order, whenever we deal with star-ordered families. However, for technical reasons, the verification of the IAS order is complicated, as discussed in the next subsection.

{\begin{example}
\label{ex:ias-notst}
Let $p\in(0,1)$, $p^\prime>p$, and $s\in (0,1)$ be given, and consider the random variables $X$ and $Y$ with distributions $P(X=p)=1-P(X=0)=s$, and $P(Y=p^\prime)=1-P(Y=0)=r=\frac{sp}{p^\prime}$. As $r<s$, the CDFs of $X$ and $Y$ cross, then $X\not\geq_{st}Y$. However, given any increasing anti-star-shaped function $\phi$, so that $\phi(0)=0$ and $\frac{\phi(x)}{x}$ is decreasing, we have $\E\phi(X)=s\phi(p)=sp\frac{\phi(p)}{p}\geq sp\frac{\phi(p^\prime)}{p^\prime}=r\phi(p^\prime)=\E\phi(Y)$. Hence $X\geq_{ias}Y$, showing that the first implication in part 1. of Proposition~\ref{prop:ias} is indeed not an equivalence.
\end{example}}

{\begin{example}
\label{ex:icv-notias}
To show that also the second implication is not an equivalence, take $X$ exponentially distributed, with CDF $\mathcal{E}$, and $Y$ with Weibull distribution, with shape and scale parameters equal to 2 and $\Gamma(\frac32)$, respectively. Therefore, $\E X=\E Y$ and their CDFs cross once. According to Theorem~4.A.22 in \cite{shaked2007}, $Y\geq_{icv}X$. On the other hand, considering the increasing anti-star-shaped function
$$
\phi(x)=\begin{cases}
    		10x, & x<\frac{1}{10},\\
            1, &  \frac{1}{10}\leq x<1, \\
    		x, & x\geq 1,\\
    	\end{cases}
$$
it is easy to verify that $\E\phi(X)>\E\phi(Y)$, so $Y\not\geq_{ias}X$.
\end{example}}

\subsection{IAS order of order statistics: a heuristic approach}
\label{subsec:ias}
From a practical point of view, a simple characterization of the IAS order, described in distributions terms, seems unavailable and remains an open problem. An alternative approach may be based on Theorem~\ref{thm:ias} below. To state our result, we need some additional notations. Given $a>0,$ define
$$
u_a(x)=	\begin{cases}
    		0, &x<a,\\
    		x, &x\geq a.\\
    	\end{cases} $$
Moreover, given sequences $a_1,\ldots,a_n>0$ and $b_1,\ldots,b_n>0$, define
\begin{equation}
\label{eq:s_n}
s_n(x)=\sum_{\ell=1}^n b_\ell u_{a_\ell}(x).
\end{equation}

\begin{theorem}
\label{thm:ias}
Let $X\sim F$ and $Y\sim G$ be absolutely continuous nonnegative random variables with density functions $f$ and $g$, respectively. $X\geq_{ias}Y$ if and only if, for every positive integer $n$, and for every $0=a_0\leq a_1\leq\cdots\leq a_n$ and $b_1,\ldots,b_n>0$, and every $s_n$ defined according to (\ref{eq:s_n}), we have:
\begin{equation}
\label{eq:ias}
\begin{array}{l}
\displaystyle\int_0^\infty s_n(t)f\circ s_n(t)\,dt+\sum_{\ell=1}^n a_\ell(F(a_\ell)-F(a_{\ell-1})) \\
\displaystyle\qquad\qquad\geq \int_0^\infty s_n(t)g\circ s_n(t)\,dt+\sum_{\ell=1}^n a_\ell(G(a_\ell)-G(a_{\ell-1})).
\end{array}
\end{equation}
\end{theorem}
\begin{proof}
As proved in Lemma~\ref{ias-ss-1}, $h$ is increasing anti-star-shaped if and only if $h=s^{-1}$, where $s$ is star-shaped. By the change of variable $h(x)=t$, $\int_0^\infty h(x)\,dF(x)=\int_0^\infty t\,dF\circ s(t)$, and equivalently for the integration with respect to $G$. Hence, we need to prove that $\int_0^\infty t\,dF\circ s(t)\geq \int_0^\infty t\,dG\circ s(t)$ for every star-shaped function $s$. Now, every star-shaped function $s$ can be approximated by a sequence $s_n$, where the sequences $a_\ell$ and $b_\ell$ satisfy the given assumptions. Indeed, this follows directly from approximating the increasing function $\frac{s(x)}{x}$ by an increasing step-function.
Therefore, by monotonous approximation,	$X\geq_{ias}Y$ if and only if $\int_0^\infty t\,dF\circ s_n(t)\geq \int_0^\infty t\,dG\circ s_n(t)$ for every integer $n$, and every $a_\ell$ and $b_\ell$ as given. As $F\circ s_n$ and $G\circ S_n$ have discontinuities at $a_1,\ldots,a_n$, and $F$ and $G$ have densities $f$ and $g$, respectively, we obtain
$$
\int_0^\infty t\,dF\circ s_n(t)=\int_0^\infty s_n(t)f\circ s_n(t)\,dt+\sum_{\ell=1}^n a_\ell(F(a_\ell)-F(a_{\ell-1})),
$$
and similarly for $\int_0^\infty t\,dG\circ s_n(t)$.
\end{proof}
    		
Although Theorem~\ref{thm:ias} provides a necessary and sufficient condition, it requires the verification of infinitely many inequalities, thus reducing its usability for a direct verification of the IAS order. To address this difficulty we propose the following simulation algorithm to check if $X\geq_{ias}Y$. Let $K$ be the total number of repetitions; for every $k=1,\ldots,K$:
\begin{enumerate}
\item
randomly generate $n$, say $n(k)$, from a discrete distribution with infinite support;
\item
randomly generate the sequences $0=a_0\leq a_1\leq\cdots\leq a_{n(k)}$ and $b_1,\ldots,b_{n(k)}$ and define $s_n$ according to (\ref{eq:s_n});
\item
compute
$$
I_F(k)=\int_0^\infty s_n(t)f\circ s_n(t)\,dt+\sum_{\ell=1}^n a_\ell(F(a_\ell)-F(a_{\ell-1})),
$$
and
$$
I_G(k)=\int_0^\infty s_n(t)g\circ s_n(t)\,dt+\sum_{\ell=1}^n a_\ell(G(a_\ell)-g(a_{\ell-1}));
$$
\item
if $R(K)=\frac1K \sum_{k=1}^K\mathbb{I}(I_F(k)-I_G(k))<1$ we have that $X\not\geq_{ias}Y$, otherwise, if $R(K)=1$ we have an indication that $X$ may dominate $Y$ in the IAS order.
\end{enumerate}
    		
The procedure described above can be used to check whether $X_{i:n}\geq_{ias}X_{j:m}$.
    			
\begin{corollary}
\label{cor:ias1}
Assume that $G\geq_\ast F$. If for every positive integer $n$, and for every choice of $0=a_0\leq a_1\leq\cdots\leq a_n\leq 1$ and $b_1,\ldots,b_n>0$,
\begin{equation}
\label{eq:ias1}
\begin{array}{l}
\displaystyle\int_0^\infty s_n(t)f_{B_{i:n}}\circ G^{-1}\circ s_n(t)\,dt+\sum_{\ell=1}^n a_\ell\left(F_{B_{i:n}}\circ G^{-1}(a_\ell)-F_{B_{i:n}}\circ G^{-1}(a_{\ell-1})\right) \\
\displaystyle\qquad\geq \int_0^\infty s_n(t)f_{B_{j:m}}\circ G^{-1}\circ s_n(t)\,dt+\sum_{\ell=1}^n a_\ell\left(F_{B_{j:m}}\circ G^{-1}(a_\ell)-F_{B_{j_m}}\circ G^{-1}(a_{\ell-1})\right),
\end{array}
\end{equation}
then $X_{i:n}\geq_{ias}X_{j:m}$.
\end{corollary}
    			
\begin{proof}
According to Theorem~\ref{thm:ias}, (\ref{eq:ias1}) is equivalent to $G^{-1}\circ B_{i:n}\geq_{ias} G^{-1}\circ B_{j:m}$. Then, the result follows from the fact that  $F^{-1}\circ G $ is increasing anti-star-shaped and this class is closed under composition, noting that, as we are integrating with respect to a beta distribution, we only need to consider the approximation in [0,1].
\end{proof}

Similarly to the previous applications, choosing a particular $G$ leads to conditions for $X_{i:n}\geq_{ias}X_{j:m}$, when $F$ belongs to the appropriate family of distributions. For example, taking $G=U$ we find conditions that apply when $F$ is star-shaped (or in the IDA class, referring to the families described in Section~\ref{sec:classes}), while the choice $G=\mathcal{E}$ gives conditions when $F$ is IHRA.

\begin{example}
\label{ex:example}
Take $n = 200$, $i =70$, $m = 70$, $j=8$. In this case we have $B_{70:200}\not\geq_{st} B_{8:70},$ which means that $X_{70:200}\not\geq_{st} X_{8:70}$. Hence, as the strongest of the stochastic order fails to hold, we may be interested in checking that some weaker order, such as ICV or IAS, holds. If $F$ is convex, these values satisfy $i\geq j$ and $\frac{i}{n+1}\geq\frac{j}{m+1}$, then $X_{70:200}\geq_{icv}X_{8:70}$.
However, if $F$ is not convex but only star-shaped, we can check the condition of Corollary~\ref{cor:ias1} for $G=U$, using the proposed algorithm. Taking $K=1000$ and random generating $n$ from a Poisson distribution with parameter $\lambda=20$, we obtain $R(1000)=1$, suggesting, although not actually proving, that $X_{70:200}\geq_{ias}X_{8:70}$. This would imply that $\E X_{70:200}\geq\E X_{8:70}$, although $X_{70:200}\not\geq_{st}X_{8:70}$, as the identity is an increasing anti-star-shaped function. Note that $X_{70:200}\geq_{icv}X_{8:70}$ means that $\E u(X)\geq\E u(Y)$ for every increasing concave function $u$, a large subset of the class of increasing anti-star-shaped functions. Nevertheless, the proposed algorithm was not able to identify an increasing anti-star-shaped function violating (\ref{eq:ias1}). Although this, of course, it is not guaranteed that such a function does not exist.
Now, take $n= 200$, $i = 65$, $m = 40$, $j= 10$. Even in this case, $X_{65:200}\not\geq_{st} X_{10:40}$. If $F$ is convex we still have $X_{65:200}\geq_{icv}X_{10:40}$. However, applying again our algorithm (with the same settings), we obtain $R(1000)=0.89$, so in this case we know that $B_{65:200}\not\geq_{ias}B_{10:40}$. Accordingly, one cannot conclude that $X_{65:200}$ dominates $X_{10:40}$ in the IAS order, although we assume that $F$ is star-shaped. Similar examples can be provided for the IHRA case.
\end{example}

\section{Bounds for probabilities of exceedance}
\label{sec:bounds}

Consider a scenario where we represent the lifetime of a $k$-out-of-$n$ system as $X_{k:n}$. A notable challenge in reliability analysis involves determining the probability that the individual component's lifetime falls below or exceeds the expected lifetime of the entire system, denoted as {$\mathbb{E}X_{k:n}$. In a parametric setting, this probability can be precisely computed using the mathematical formula of the parent CDF $F$. However, when the exact form of $F$ is unknown, we can leverage information about its overall shape to establish upper or lower bounds for this probability.  This would follow from the application of Jensen's inequality, under the assumption that $F$ belongs to a convex-ordered family $\mathcal{F}_{\mathcal{V}}^G$ or $\mathcal{F}_{\mathcal{C}}^G$. We remark that the case in which $G$ is the uniform has been already discussed by \cite{ali1965}.

\begin{proposition}
Given a CDF $G$, define $p_{i:n}^G=G(\E(G^{-1}\circ B_{i:n}))$. For every $i$ and $n$ such that $\E (G^{-1}\circ B_{i:n})$ is defined, the following holds.
\begin{enumerate}
\item 	
Let $F\in\mathcal{F}_{\mathcal{V}}^G.$ Then, $P(X\leq \E X_{i:n})\leq p_{i:n}^G$.
\item 	
Let $F\in\mathcal{F}_{\mathcal{C}}^G.$ Then, $P(X\leq \E X_{i:n})\geq p_{i:n}^G$.
\end{enumerate}
In particular, given a pair of CDFs $G_1$ and $G_2$, if $G_1\leq_{st} G_2$, then $p_{i:n}^{G_1}\geq p_{i:n}^{G_2}$.
\end{proposition}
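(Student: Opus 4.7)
The strategy is to view $X_{i:n}$ through the decomposition $X_{i:n}\eqd F^{-1}\circ G\circ G^{-1}(B_{i:n})$ and invoke Jensen's inequality applied to $\phi=F^{-1}\circ G$ at the random variable $Y=G^{-1}(B_{i:n})$.

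For part 1, the hypothesis $F\in\mathcal{F}_{\mathcal{V}}^G$ unpacks, via Definitions~\ref{t order} and \ref{dom_class}, into the statement that $\phi=F^{-1}\circ G$ is increasing and concave. Jensen's inequality then gives $\E X_{i:n}=\E\phi(Y)\le \phi(\E Y)=F^{-1}(G(\E Y))$. Applying the increasing function $F$ to both sides and recognising that $P(X\le \E X_{i:n})=F(\E X_{i:n})$ at continuity points yields $P(X\le \E X_{i:n})\le G(\E G^{-1}(B_{i:n}))=p_{i:n}^G$, which is part 1. Part 2 mirrors this argument: under $F\in\mathcal{F}_{\mathcal{C}}^G$, the map $\phi$ is convex, and the Jensen bound reverses.

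For the final comparison, I would set $\psi=G_2^{-1}\circ G_1$ and use that $G_1\le_{st}G_2$ is equivalent to $G_1^{-1}\le G_2^{-1}$, so that $\psi(x)\ge x$ pointwise. With $Y=G_1^{-1}(B_{i:n})$ one has $G_2^{-1}(B_{i:n})=\psi(Y)$ and the identity $G_1=G_2\circ\psi$. The claim $p_{i:n}^{G_1}\ge p_{i:n}^{G_2}$ (the natural reading of the last line, up to an obvious typo on its right-hand side) then reduces, by monotonicity of $G_2$, to the scalar inequality $\psi(\E Y)\ge \E\psi(Y)$.

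The main obstacle I anticipate is precisely this last step: $G_1\le_{st}G_2$ only guarantees $\psi(x)\ge x$, whereas a direct Jensen argument requires $\psi$ concave. I would therefore aim to close the gap either by strengthening the hypothesis to a convex transform relation between $G_1$ and $G_2$ (which makes $\psi$ concave and lets Jensen run directly), or by exploiting the special structure of beta order statistics together with a monotone coupling of $Z_k\sim G_k$ to compare the $p_{i:n}^{G_k}$ as exceedance probabilities in their respective native scales. Parts 1 and 2 are essentially one line of Jensen each, so all the real work lies in carefully justifying this comparison step.
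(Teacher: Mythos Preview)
Your treatment of parts 1 and 2 is correct and is exactly the paper's argument: the paper phrases it as Jensen applied to the convex map $G^{-1}\circ F$ at $X_{i:n}$, you phrase it as Jensen applied to the concave map $F^{-1}\circ G$ at $G^{-1}(B_{i:n})$, and since $G^{-1}\circ F(X_{i:n})\eqd G^{-1}(B_{i:n})$ these are the same inequality read from opposite ends.

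Your hesitation about the final comparison is well founded, and the paper's own proof does not resolve it either. Its entire argument for that clause is the sentence ``$G_1\leq_{st}G_2$ implies $G_1(B_{i:n})\leq_{st}G_2(B_{i:n})$, which gives the inequality between the expectations''; at best this produces an ordering of the means $\E G_k^{-1}(B_{i:n})$, which, as your reduction to $\psi(\E Y)$ versus $\E\psi(Y)$ makes transparent, does not compare $G_k(\E G_k^{-1}(B_{i:n}))$ without extra structure on $\psi=G_2^{-1}\circ G_1$. In fact neither direction holds under $\leq_{st}$ alone: with $B_{1:1}$ uniform, taking $G_1(x)=x^{1/10}$ on $[0,1]$ and $G_2$ uniform on $[0,2]$ gives $G_1\leq_{st}G_2$ but $p_{1:1}^{G_1}=(1/11)^{1/10}\approx 0.79>\tfrac12=p_{1:1}^{G_2}$, while $G_1$ uniform on $[0,1]$ and $G_2$ unit exponential give $p_{1:1}^{G_1}=\tfrac12<1-e^{-1}=p_{1:1}^{G_2}$. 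Your proposed strengthening to a convex transform relation is the correct repair: if $G_2\geq_c G_1$ then $\psi$ is convex and Jensen yields $p_{i:n}^{G_1}\leq p_{i:n}^{G_2}$ directly, which is precisely the chain $\mathcal{E}_-\leq_c U\leq_c\mathcal{E}\leq_c LL$ underlying the monotone columns of Table~\ref{t1}. Note, incidentally, that this table also shows the intended conclusion is $p_{i:n}^{G_1}\leq p_{i:n}^{G_2}$, not the $\geq$ you inferred from the typo.
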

\begin{proof}
We prove just case 1., as case 2. is dealt similarly. We are assuming that $F^{-1}\circ G$ is concave, hence $G^{-1}\circ F$ is convex. Therefore,  Jensen's inequality gives $\E X_{i:n}\leq F^{-1}\circ G(\E(G^{-1}\circ F(X_{i:n})))$.
Now, taking into account that $F(X_{i:n})\sim B_{i:n},$ applying $F$ to both sides we obtain
$P(X\leq \E X_{i:n})\leq G(\E(G^{-1}B_{i:n}))$.
Note that $G_1\leq_{st} G_2$ implies that $G_1^{-1}(B_{i:n})\leq_{st} G_2^{-1}(B_{i:n})$, so that $\E(G_1^{-1}B_{i:n})\leq\E(G_2^{-1}B_{i:n})$, hence the last statement follows.
\end{proof}

Put otherwise, the above result means that, if $F\in\mathcal{F}_{\mathcal{V}}^G$, the expected order statistic $\E X_{i:n}$ is always smaller than or equal to the  $ p_{i:n}^G$-quantile of $X$. Similarly, if $F\in\mathcal{F}_{\mathcal{C}}^G$, the expected order statistic $\E X_{i:n}$ is always greater than or equal to the  $ p_{i:n}^G$-quantile of $X$, that is, $\E X_{i:n}\geq F^{-1}(p_{i:n}^G)$. This result also enables a useful characterization of the LL1 distribution. Indeed, generally one may approximate $\E X_{i:n}$ with $F^{-1}(\frac{i}n)$: for $n\to\infty$ and $\frac{i}n\to p$ (constant), $F^{-1}(\frac{i}n)\to\E X_{i:n}$. This result is exact for $n$ finite if and only if $X$ has an LL1 distribution.
\begin{corollary}[A characterisation of the LL1 distribution]
$\E X_{i:n}=F^{-1}(\frac{i}{n})$ if and only if $F(x)=LL(\frac x a)$, for any scale parameter $a>0$.
\end{corollary}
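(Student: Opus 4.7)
The plan is to verify the easy direction by direct computation, and to establish the converse by converting the hypothesis into a family of Jensen equalities at Beta distributions from which a Taylor expansion forces an auxiliary function to be affine. For sufficiency, setting $F(x)=LL(x/a)=x/(a+x)$ so that $F^{-1}(p)=ap/(1-p)$, and writing $X_{i:n}\eqd F^{-1}(B_{i:n})=aB_{i:n}/(1-B_{i:n})$ with $B_{i:n}\sim\mathrm{Beta}(i,n-i+1)$, the standard beta moment identity $\E[B_{i:n}/(1-B_{i:n})]=\mathcal{B}(i+1,n-i)/\mathcal{B}(i,n-i+1)=i/(n-i)$ immediately yields $\E X_{i:n}=ai/(n-i)=F^{-1}(i/n)$.

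For the converse, let $\xi=F^{-1}\circ LL$ so that $X\eqd\xi(Y)$ for $Y\sim LL$, with $\xi$ increasing and $\xi(0)=0$. Since $\E Y_{i:n}=i/(n-i)$, the hypothesis reads $\E\xi(Y_{i:n})=\xi(\E Y_{i:n})$ for every $1\leq i\leq n-1$. Substituting $b=y/(1+y)$ one has $Y_{i:n}\eqd B_{i:n}/(1-B_{i:n})$ with $B_{i:n}\sim\mathrm{Beta}(i,n-i+1)$, and with $\tilde\xi(b)=\xi(b/(1-b))$ the identity becomes $\tilde\xi(i/n)=\E\tilde\xi(B_{i:n})$. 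Setting $\psi(b)=(1-b)\tilde\xi(b)$ and using the elementary identity $\mathcal{B}(i,n-i+1)=\frac{n-i}{n}\mathcal{B}(i,n-i)$, multiplication of both sides by $1-i/n$ recasts the hypothesis as
\[
\psi(i/n)=\E\psi(\widetilde B_{i,n}),\qquad\widetilde B_{i,n}\sim\mathrm{Beta}(i,n-i),
\]
which, since $\E\widetilde B_{i,n}=i/n$, is Jensen's equality for $\psi$ at every $\mathrm{Beta}(i,n-i)$ with $1\leq i\leq n-1$ and $n\geq 2$.

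To close, I would fix $p\in(0,1)$, take $i_n=\lfloor pn\rfloor$, and expand $\psi$ at $i_n/n$ to second order; the first-order contribution vanishes because $\E\widetilde B_{i_n,n}=i_n/n$, leaving
\[
0=\E\psi(\widetilde B_{i_n,n})-\psi(i_n/n)=\tfrac12\psi''(i_n/n)\,\mathrm{Var}(\widetilde B_{i_n,n})+O(n^{-2}).
\]
Since $\mathrm{Var}(\widetilde B_{i_n,n})=p(1-p)/(n+1)+o(n^{-1})$, multiplying by $n$ and letting $n\to\infty$ forces $\psi''(p)=0$ at every interior $p$, so $\psi$ is affine. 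The boundary value $\psi(0)=\xi(0)=0$ then gives $\psi(b)=Ab$, whence $\xi(y)=Ay$, $F^{-1}(p)=Ap/(1-p)$, and $F(x)=LL(x/A)$. The main technical obstacle is the Taylor step, which requires $\psi$ (and hence $F$) to be sufficiently smooth; under weaker regularity, affinity of $\psi$ could instead be extracted from the full family of Jensen-equality identities by exploiting that the linear span of the Beta densities $\{\mathrm{Beta}(i,n-i)\}_{1\leq i\leq n-1,\,n\geq 2}$ is rich enough to force such a fixed-point function to be affine.
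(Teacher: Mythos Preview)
Your sufficiency direction is correct and amounts to the same computation the paper uses: the paper routes the ``if'' part through its preceding Proposition, applying both the $\mathcal{F}_{\mathcal{V}}^{LL}$ and $\mathcal{F}_{\mathcal{C}}^{LL}$ Jensen inequalities to sandwich $F(\E X_{i:n})$ between $\frac{i}{n}$ and $\frac{i}{n}$, whereas you compute $\E[B_{i:n}/(1-B_{i:n})]=i/(n-i)$ directly. These are the same fact, and both are fine.

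For the converse you actually go further than the paper. The paper's proof only observes that $F\in\mathcal{F}_{\mathcal{C}}^{LL}\cap\mathcal{F}_{\mathcal{V}}^{LL}$ is equivalent to $F(x)=LL(x/a)$ and then derives $\E X_{i:n}=F^{-1}(i/n)$ from membership in that intersection; it does not show that the order-statistic identity forces $F$ into the intersection. Your recasting---defining $\psi(b)=(1-b)\tilde\xi(b)$ and reducing the hypothesis to $\psi(i/n)=\E\psi(\widetilde B_{i,n})$ with $\widetilde B_{i,n}\sim\mathrm{Beta}(i,n-i)$---is a clean and correct manipulation (the identity $\mathcal{B}(i,n-i+1)=\tfrac{n-i}{n}\mathcal{B}(i,n-i)$ does exactly what you claim), and the Taylor argument then forces $\psi''\equiv 0$, hence $\psi$ affine, hence $\xi$ linear and $F=LL(\cdot/a)$. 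You are right that the only loose end is regularity; note that $\psi\in C^2(0,1)$ already suffices, since the beta laws are supported in $[0,1]$ and concentrate, so the second-order remainder is $o(\mathrm{Var}(\widetilde B_{i_n,n}))$ rather than needing the full $O(n^{-2})$. In short: same content for the ``if'' direction, and for the ``only if'' your argument is genuinely more complete than what the paper supplies.
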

\begin{proof}
First, note that, for the LL1 distribution, the expectations $\E X_{i:n}$ are finite for every $i=1,\ldots,n-1,$ while $\E X_{n:n}=F^{-1}(1)=\infty$.
$F$ belongs to both $\mathcal{F}_{\mathcal{C}}^{LL}$ and $\mathcal{F}_{\mathcal{V}}^{LL}$ if and only if $F(x)= LL(\frac x a)$. Without loss of generality, let $a=1$. In this case, it is easy to verify that $p_{i:n}^{LL}=\frac{i}n$, so $\frac{i}n\leq LL(\E X_{i:n})\leq\frac{i}n$. This means that the $\frac{i}n$-quantile of the LL1 is $\E(X_{i:n})=\frac{i}{n-i}$.
\end{proof}
Common choices of $G$ yield the following explicit expressions of $p_{i:n}^G$:
\begin{enumerate}
	\item If $G=U,$ $p_{i:n}^U=\frac{i}{n+1}$.
	\item If $G=\mathcal{E}$, $p_{i:n}^\mathcal{E}=1-\exp\left(-\sum_{k=n-i+1}^n \frac1k\right)$.
	\item If $G=LL$,  $p_{i:n}^{LL}=\frac{i}n$.
\item	If $G=\mathcal{E_-}$, $p_{i:n}^\mathcal{E_-}=\exp\left(-\sum_{k=i}^n \frac1k\right)$.
\end{enumerate}

Table~\ref{t1} shows the $p_{i:n}^G$ bounds for $n=10$ and some choices of $G$.
The application of our results is quite straightforward. For instance, if we know that the CDF of interest, $F$, is IHR and has a decreasing density, as is the case, for example, of the Gompertz distributions (for suitably chosen parameters), then the probability of having $X\leq \E X_{i:n}$ is always between $p_{i:n}^{U}$ and $p_{i:n}^{\mathcal{E}}$, that is,
$$
\frac{i}{n+1}\leq P(X\leq \E X_{i:n})\leq 1-\exp\left(-\sum_{k=n-i+1}^n \frac1k\right).
$$
If $i=3$ and $n=10$, this means that $P(X\leq \E X_{i:n})\in{[0.273,0.285]}$.
Similarly, if $F$ is IOR and DRHR, then
$$
\exp\left(-\sum_{k=i}^n \frac1k\right)\leq P(X\leq \E X_{i:n})\leq \frac{i}{n}.
$$
As these classes are wider than the previously considered, these bounds are generally weaker, so, for $i=3$ and $n=10$, we now find $P(X\leq \E X_{i:n})\in{[0.240,0.300]}$.
\begin{table}[h]
\centering
\begin{tabular}{|c|rrrrrrrrrr|}
	\hline
$G$	& $i=1$& $i=2$ & $i=3$ & $i=4$ & $i=5$ & $i=6$ & $i=7$ & $i=8$ & $i=9$ & $i=10$ \\
	\hline
	$LL$	& 0.100 & 0.200 & 0.300 & 0.400 & 0.500 & 0.600 & 0.700 & 0.800 & 0.900 & 1.000 \\
	$\mathcal{E}$ & 0.095 & 0.190 & 0.285 & 0.381 & 0.476 & 0.571 & 0.666 & 0.760 & 0.855 & 0.947 \\
	$U$ & 0.091 & 0.182 & 0.273 & 0.364 & 0.455 & 0.545 & 0.636 & 0.727 & 0.818 & 0.909 \\
	$\mathcal{E}_-$ & 0.053 & 0.145 & 0.240 & 0.334 & 0.429 & 0.524 & 0.619 & 0.715 & 0.810 & 0.905 \\
	\hline
\end{tabular}
\caption{$p_{i:10}^G$ for different choices of $G$.}\label{t1}
\end{table}
The bounds, with respect to the families of distributions, are sharp, as illustrated in Figure~\ref{fig:bounds}, where we plotted true probabilities for two distributions that are both IRH and DRHR (Weibull with shape parameter larger than 1, and power distribution), and the inverted power distribution with exponent $\frac12$, which is not IHR, hence violates the lower bound.
\begin{figure}[h]
\centering
\begin{tikzpicture}[xscale=5,yscale=70]
\renewcommand{\TickSize}{.15pt}%
\draw (-.05,0) node [left]{{\scriptsize $\exp\left(-\sum_{k=i}^n \frac1k\right)$}} -- (1.2,0);
\foreach \x in {.1,.3,.5,.7,.9} {%
    \draw ($(\x,0) + (0,-\TickSize/5)$) node [below] {{\scriptsize $\x$}} -- ($(\x,0) + (0,\TickSize/5)$);}
\draw[->] (0,-.031) -- (0,.048);
\foreach \y in {-.03,-.015,.015,.03,.045} {%
    \draw ($(0,\y) + (-\TickSize*3,0)$) node [left] {{\scriptsize $\y$}} -- ($(0,\y) + (\TickSize*3,0)$);}
\draw (.1,0.04171042) -- (.2,0.04501787) -- (.3,0.04590034) --(.4,0.04625058) -- (.5,0.04638576) --
      (.6,0.04638576) -- (.7,0.04625058) -- (.8,0.04590034) -- (.9,0.04501787) -- (1,0.04171042);
\draw[loosely dashed,red] (.1,0.01527902) -- (.2,0.01785149) -- (.3,0.01968728) --(.4,0.02146076) --
                          (.5,0.02327178) -- (.6,0.02515650) -- (.7,0.02713690) -- (.8,0.02921840) --
                          (.9,0.03131254) -- (1,0.03241204);
\draw[dashed,blue] (.1,0.013282984) -- (.2,0.012888947) -- (.3,0.011639142) -- (.4,0.010247637) --
                   (.5,0.008811972) -- (.6,0.007357585) -- (.7,0.005893680) -- (.8,0.004424324) --
                   (.9,0.002951572) -- (1,0.001476569);
\draw[dashdotted,BlueViolet] (.1,0.033676912) -- (.2,0.028979144) -- (.3,0.021895098) -- (.4,0.014333784) --
                             (.5,0.006639215) -- (.6,-0.001060823) -- (.7,-0.008671621) -- (.8,-0.016055052) --
                             (.9,-0.022884813) -- (1,-0.027928909);
\end{tikzpicture}
\caption{Upper bounds (black) and true values for $P(X\leq \E X_{i:n})$: Weibull(3,1) (red), $F(x)=x^3$ (blue), $F(x)=1-\sqrt{1-x}$ (dashdotted, dark blue) with respect to the lower bounds (the horizontal line).}
\label{fig:bounds}
\end{figure}
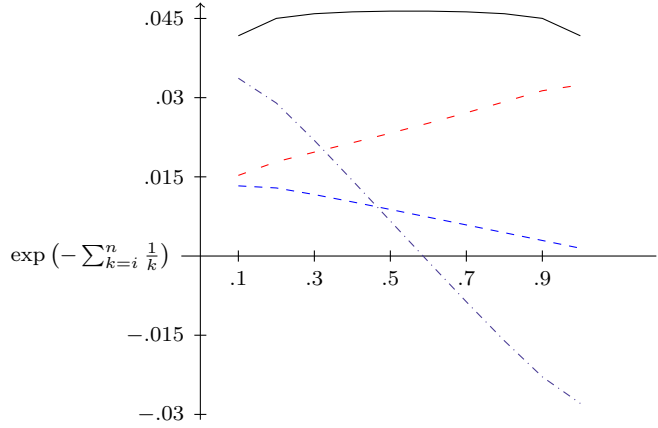

\subsection{Application}
The paper~\cite{nichols2006} provides a table containing a sample of size $n=100$ of breaking stress for carbon fibers. Applying the tests of~\cite{landoihr} and~\cite{lando2023nonparametric}, respectively, it can be tested that this dataset is likely to come from an IOR distribution. Moreover, the dataset also seems to satisfy the DRHR assumption.
A straightforward interval for $\E X_{i:n}$ is then obtained by plugging in the appropriate bounds described above to the empirical CDF, that is $\left[\F_n^{-1}\left(\exp\left(-\sum_{k=i}^n \frac1k\right)\right), \F_n^{-1}(\frac{i}n)\right]$. Taking, as an example, $i=20$, this interval reduces to a single point, as we get $1.69\leq\widehat{\E} X_{20:100}\leq 1.69$. However, we may use instead estimators that take into account the available information about the shape of the distribution: we may use $\F_n^{IOR}$, introduced by~\cite{lando2023nonparametric}, as an IOR estimator of the CDF, and $\F_n^{DRHR}$, proposed by~\cite{sengupta2005}, as a DRHR estimator. Differently from the empirical CDF, these estimators are continuous. Hence, an interval for $\E X_{i:n}$ may be given by $\left[(\F_n^{DRHR})^{-1}\left(\exp\left(-\sum_{k=i}^n \frac1k\right)\right), (\F_n^{IOR})^{-1}(\frac{i}{n})\right]$. For this sample, thus taking into account the knowledge about the shape of the CDF, this leads to $\E X_{20:100}\in[1.623, 1.716]$.


\section*{Funding}
T.L. was supported by the Italian funds ex MURST 60\% 2022. I.A. and P.E.O. were partially supported by the Centre for Mathematics of the University of Coimbra UID/MAT/00324/2020, funded by the Portuguese Government through FCT/MCTES and co-funded by the European Regional Development Fund through the Partnership Agreement PT2020.

\bibliographystyle{plainnat} 
\bibliography{biblio}

\end{document}